\begin{document} 

\title{A Unified Statistical Learning Model for Rankings and Scores with Application to Grant Panel Review}

\author{\name Michael Pearce \email mpp790@uw.edu \\
       \addr Department of Statistics\\
       University of Washington\\
       Seattle, WA 98195-4322, USA
       \AND
       \name Elena A. Erosheva \email erosheva@uw.edu \\
       \addr Department of Statistics, School of Social Work, and the Center for Statistics and the Social Sciences\\
       University of Washington\\
       Seattle, WA 98195-4322, USA}

\editor{Gal Elidan}

\maketitle

\begin{abstract}
Rankings and scores are two common data types used by judges to express preferences and/or perceptions of quality in a collection of objects. Numerous models exist to study data of each type separately, but no unified statistical model captures both data types simultaneously without first performing data conversion. We propose the Mallows-Binomial model to close this gap, which combines a Mallows $\phi$ ranking model with Binomial score models through shared parameters that quantify object quality, a consensus ranking, and the level of consensus among judges. We propose an efficient tree-search algorithm to calculate the exact MLE of model parameters, study statistical properties of the model both analytically and through simulation, and apply our model to real data from an instance of grant panel review that collected both scores and partial rankings. Furthermore, we demonstrate how model outputs can be used to rank objects with confidence. The proposed model is shown to sensibly combine information from both scores and rankings to quantify object quality and measure consensus with appropriate levels of statistical uncertainty.
\end{abstract}

\begin{keywords}
  preference learning, score and ranking aggregation, Mallows model, A* algorithm, peer review
\end{keywords}

\section{Introduction}\label{sect:introduction}
Preference data is common to our world: Citizens express preferences through voting in elections, critics rank movies when creating annual top-10 lists, judges score figure skaters in the Olympics using numerical scales, wine critics use Likert scales with words such as ``mediocre'' to rate wines, consumers use stars to convey the quality of a product, and so on. 
As can be seen in these examples, preferences appear in different forms: most commonly as \textit{rankings} or \textit{scores}. Rankings denote a relative order of objects from best to worst, potentially allowing ties; ranks refer to the specific place of each object in the ranking. Scores are numerical values given to objects to denote their quality. Scores provide more granular information than rankings through the relative distance between scores and the rankings they induce. A number of existing models \citep{Mallows1957,Fligner1986,Rost1988} have been studied to model rankings and scores individually. In these models, a natural goal is to aggregate preferences into a \textit{consensus ranking}, which expresses the overall preferences of those providing the rankings, whom we call judges. Statistical models may additionally incorporate information regarding the uncertainty of those rankings, or the level of global or local consensus among the judges.

When both rankings and scores are available, incorporating the additional information may improve accuracy of preference aggregation. This is partly because rankings and scores provide distinct types of information. Scores allow judges to express their overall perceptions on a numerical scale and make relative comparisons implicitly. However, some judges may be naturally more lenient or harsh (e.g., one judge's 5/10 may be another's 7/10), or may become cognitively burdened by the number of scores they need to provide and stop expressing internally consistent scores \citep{johnson2008statistical,Wang2018}. For these reasons, scores have sometimes been described as highly subjective and inconsistent in different bodies of literature \citep{biernat1995shifting,biernat1997gender,biernat2009race,mallard2009fairness}. On the other hand, rankings are thought to provide objective comparisons between ranked objects because they require the judge to make explicit comparisons that are scale-free \citep{biernat1995shifting}. For example, we can be more confident that a judge truly perceives object A above object B if A is ranked above B, which may not be true for scores. However, rankings force demarcation even when it does not exist and lack granularity in comparisons (e.g., first and second place are nearly tied, yet third place is far behind). As can be seen, scores and rankings have complementary pros and cons. Therefore, regardless of the amount of each data type present (or missing), incorporating both types into a single model has been suggested as a potential “best of both worlds” approach to solve an existing false dichotomy by incorporating the benefits of each preference data type \citep{shah2018design,ovadia2004ratings}.

Rankings and scores arise simultaneously in a number of scenarios. For example, in peer review, judges may score proposals numerically and subsequently rank their top few favorites. Another example arises during information retrieval when relevancy criteria from different sources and types---such as from an algorithmic database search for relevant documents or from human judgment---are available \citep{Hsu2005}. In the former example, judges provide both kinds of information, while in the latter, one system may provide rankings and another may provide scores. Either way, it is unclear how the information from each preference data type should be utilized. \cite{Renda2003} suggest that neither ranking or score data is uniformly better than the other when analyzed alone (in the context of metasearch), and many authors have argued that using both is better \citep{Belkin1995,Macdonald2009,Lee1997,Balog2012}. To incorporate both sources of information, conversion of data from one type to another is common \citep{Hsu2005,Bhamidipati2008,Li2009,shah2018design}. These methods will be discussed in Section \ref{sect:relatedwork}. Yet, to the best of our knowledge, there are no statistical models that combine information from both rankings and scores without data conversion. Although convenient, data conversion may result in a loss of information or affect results depending on the chosen conversion process, which is suboptimal.

In this paper, we propose a unified model to capture information from both rankings and scores when applied to a finite collection of objects. Conditional on the objects' true underlying qualities, Binomial scores and Mallows rankings have independent error distributions that reflect the distinct tasks of formulating scores and rankings and allow for judges to be internally inconsistent when expressing different types of preferences. Model parameters can be used to quantify both the absolute and relative qualities of the objects, identify a consensus ranking, and measure the strength of consensus using an existing metric in the literature. We formulate exact and approximate algorithms to find maximum likelihood estimators of the model parameters and demonstrate regimes in which each may be useful. In addition to simulation studies, we apply the model to real data from grant panel review in which scores and rankings were collected from the same judges. We show how the estimated parameters can be used to learn the rank ordering of grant proposals and the associated statistical uncertainty to make funding decisions.

The rest of this paper is organized as follows. After discussing related work in Section \ref{sect:relatedwork}, we state our proposed model and describe its statistical properties in Section \ref{sect:method}. We propose exact and approximate frequentist estimation algorithms in Section \ref{sect:estimation}, and compare their speed and accuracy on simulated data in Section \ref{sect:simulation}. We illustrate the model application on real ranking and score data collected during the Fall 2020 grant panel review cycle at the American Institute of Biological Sciences (AIBS) in Section 
\ref{sect:realdata}. We conclude with a discussion in Section \ref{sect:discussion}.

\section{Related Work}\label{sect:relatedwork}

A variety of methods can be used to aggregate ranking data alone, such as the Bradley-Terry, Plackett-Luce, and Mallows models \citep{Bradley1952,Plackett1975,Mallows1957}. The latter model and its extensions have been particularly well-studied in recent decades. In a seminal work, \cite{Fligner1986} state statistical qualities of the Mallows $\theta$ and $\phi$ models, which are based on the correlation coefficients of \cite{Spearman1904} and \cite{Kendall1938}, respectively. The Mallows $\phi$ model has received particular attention as a natural fit in many ranking applications. Henceforth referred to simply as the Mallows model, it is a location-scale probability distribution that measures distance between rankings using Kendall's $\tau$, which can be defined as the minimum number of adjacent swaps between objects needed to convert one ranking into another, or, equivalently, as the number of discordant object pairs between two rankings (as defined in Equation \ref{Eq:Kendall}). Many extensions exist, such as the Generalized Mallows model which permits unique scale parameters at each ranking level \citep{Fligner1988} and the Infinite Generalized Mallows model which aggregates rankings over infinite collections of objects \citep{Meila2010}. 

For rankings, we focus on the original Mallows model and its partial-ranking counterpart (also proposed in \citeauthor{Fligner1986}, \citeyear{Fligner1986}). These distributions can model the following situation: Suppose a judge scores a collection of $J$ objects and provides a top-$R$ ranking, $R\leq J$. The ranking is called \textit{partial} when $R<J$ and \textit{complete} when $R=J$. We denote his/her ranking by $\Pi=(\pi(1),\dots,\pi(R))$, where $r\in\{1,\dots,R\}$ is the \textit{rank} of object $\pi(r)$ (i.e., $\pi(r)$ is the $r^\text{th}$ most preferred object). If an object is not ranked, it is assumed to be perceived worse than any ranked object; any pair of unranked objects cannot be compared. Then, for fixed $R$, the probability of observing a specific ranking $\pi$ of length $R$ under the Mallows model with consensus ranking $\pi_0$ of length $J$ and scale parameter $\theta$ is
\begin{align}
    P[\Pi=\pi | \pi_0,\theta] &= \frac{e^{-\theta d_{R,J}(\pi,\pi_0)}}{\psi_{R,J}(\theta)}\label{Eq:Mallows}\\
    d_{R,J}(\pi,\pi_0) &= \sum_{1\leq j_1<j_2\leq J} I\{j_1\prec_{\pi} j_2 \text{ and } j_2\prec_{\pi_0} j_1\} \label{Eq:Kendall}\\
    \psi_{R,J}(\theta) &= \prod_{j=1}^R \frac{1-e^{-\theta(J-j+1)}}{1-e^{-\theta}},\label{Eq:psi}
\end{align}
where $d_{R,J}(\cdot,\cdot)$ is  the Kendall distance between rankings of length $R$ and $J$, respectively, assembled over the same set of $J$ objects, and the notation $j_1\prec_{\pi}j_2$ means that object $j_1$ is ranked strictly better in $\pi$ than object $j_2$ (pairs of unranked objects do not satisfy this operator). The model is exponential; the consensus ranking $\pi_0$ is the modal probability ranking. When $\theta$ is large, the distribution of rankings is concentrated around $\pi_0$, and as $\theta$ approaches 0, the distribution approaches a uniform over the permutations of length $R$.

On the other hand, it is rare for scores to be modeled statistically. Instead, simple summary statistics such as the mean or median are commonly used \citep{lee2013bias,tay2020beyond,NIHPeerReview}. Scores often arise from a discrete, ordinal, and equally-spaced set with minimum and maximum allowable values. These qualities make a number of common univariate probability distributions inappropriate (e.g., continuous distributions, Poisson, geometric, etc.). In this case, the allowable set of scores can be linearly transformed into the set of integers $\{0,1,\dots, M\}$. The Binomial distribution is therefore a theoretically reasonable and convenient choice as a well-known and unimodal distribution with a natural mean parameter. We model the score, $X$, for any given object using the Binomial($M,p$) distribution,
\begin{align}
    P[X=x|M,p] &= {M \choose x}p^x(1-p)^{M-x},\label{Eq:binomialdensity}
\end{align}
where the binomial probability $p$ is called the object's \textit{true underlying quality} and $Mp$ its \textit{expected score}.

The aforementioned distributions cannot model rankings and scores together. In the social and health sciences, the literature on \textit{mixed-outcomes} includes proposed methods for combining preference data of different types via \textit{conversion}, such as converting rankings into scores or scores into rankings prior to performing a statistical analysis \citep{Thurstone1927,Salomon2003,Kim2015,Venkatraghavan2019}. The field of information retrieval within computer science also includes a growing literature on utilizing both ranking and score data in the context of \textit{data fusion} \citep{Fagin2002, Hsu2005, Bhamidipati2008, Li2009}. Although some authors have argued that many data fusion methods can be theoretically justified as probabilistic since they often estimate likelihoods of relevance during information retrieval \citep{Belkin1995}, most models are not explicitly considered as such. Across fields, a number of authors suggest that using both rankings and scores when available is generally better at eliciting accurate preference aggregation than using any single data type individually \citep{Belkin1995, Lee1997, Renda2003, ovadia2004ratings, Macdonald2009, Balog2012, shah2018design}. 

The existing literature lacks a unified statistical model for ranking and score preference data. In the next section, we propose the first such model.

\section{A Statistical Model for Rankings and Scores}\label{sect:method}

We now propose the Mallows-Binomial model and then discuss its statistical properties.

\subsection{The Mallows-Binomial Model}
Suppose a judge assesses $J$ objects using both rankings and scores. We assume that each object $j\in\{1,\dots,J\}$ has a true underlying \textit{quality}, $p_j\in[0,1]$. We use the convention that lower values of $p_j$ denote better quality. Let $X=[X_1 \ X_2 \ \dots X_J]^T$ be a vector of integer scores, where each $X_{j}\in\{0,1,\dots,M\}$ is the score assigned to object $j$. Let $\Pi$ be the top-$R$ ranking of the objects, $R\leq J$, such that no ties are allowed. $\Pi$ is called a \textit{partial ranking} when $R<J$ and a \textit{complete ranking} when $R=J$. Rankings need not align with the order of the scores.

We propose a joint probability model for the judge's ranking $\Pi$ and scores $X$,
\begin{align}
    P\big[\Pi=\pi,X=x|p,\theta\big] =& \frac{e^{-\theta d_{R,J}(\pi,\pi_0)}}{\psi_{R,J}(\theta)}\times \prod_{j=1}^J {M\choose x_{j}}p_j^{x_{j}}(1-p_j)^{M-x_{j}} \label{Eq:model}\\
    \ p=[p_1 \ \dots \ p_J]^T\in[0,&1]^J,\ \pi_0 = \text{Order}(p),\theta>0,\nonumber\\
    X_{1},\dots,X_{J},\Pi \text{ are all}&\text{ mutually independent,}\nonumber
\end{align}
where $d_{R,J}(\cdot,\cdot)$ is the Kendall's $\tau$ distance between two rankings and $\psi_{R,J}(\theta)$ is the normalizing constant of a (partial) Mallows model, as seen in Equation \ref{Eq:psi}. We refer to this model as the \textit{Mallows-Binomial}($p,\theta$) distribution.

A key aspect of this model is the incorporation of two distinct types of preference data. It can be seen directly from Equation \ref{Eq:model} that our model corresponds to $J+1$ joint observations per judge, with $J$ scores and one (partial) ranking. The Mallows-Binomial model incorporates information from both data types without conversion to learn object quality parameters, $p_j$, $j=1,\dots,J$. The joint likelihood ties together the scores and ranking by assuming that the modal consensus ranking of the Mallows component is the same as the ranking induced by the Binomial score parameters, $p_j$, $j=1,\dots,J$. This formulation naturally reflects the relationship between scores and rankings given each object's true underlying quality and the order of all objects induced by their true underlying qualities.  The parameter $\theta$ is the \textit{consensus scale parameter}, which can be interpreted exactly as in the Mallows model with respect to the rankings: Large values of $\theta$ suggest strong ranking consensus among judges. As $\theta$ decreases to 0, the model approaches a uniform distribution over the possible rankings. The Mallows-Binomial model constitutes a proper probability distribution as the product of $J+1$ independent component distributions given the parameters $(p,\theta)$.

The Mallows-Binomial model does not assume that each ranking is of the same length, that the scores and ranking of each judge align, or even that the same judges provide both rankings and scores. The only assumption is that both rankings and scores reflect the objects' true underlying qualities. 
Inconsistent preferences arise in the peer review context considered in Section \ref{sect:realdata}. In our grant peer review data, judges first score objects (grant proposals) and openly share their scores during a panel discussion, and then provide a separate partial ranking after the discussion of all objects is completed. The partial ranking is made in private, potentially leading to changes in perception of quality. Inconsistent preferences may also arise when scores and rankings are provided by different sets of judges. For example, in database search or information retrieval, relevancy criteria used by algorithms may arise from completely separate systems, such as when one system (e.g., a machine learning algorithm) provides numerical scores and another (e.g., a human judge) ranks the most relevant objects. Such situations do not affect estimation or interpretation of estimated parameters; our model can still capture distinct preferences.

\subsection{Statistical Properties}\label{sect:properties}

\subsubsection{Identifiability}

We prove that the Mallows-Binomial($p,\theta$) model is identifiable via Proposition \ref{prop:ident}.

\begin{proposition}
\label{prop:ident}
Let $M$, $J$, and $R$ be fixed and positive integers such that $R\leq J$. Then the Mallows-Binomial($p,\theta$) model is identifiable.
\end{proposition}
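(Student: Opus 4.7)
The plan is to identify the parameters in two decoupled stages by exploiting the mutual independence of $X_1,\dots,X_J,\Pi$ in Equation \ref{Eq:model}. Suppose $(p,\theta)$ and $(p',\theta')$ induce the same joint distribution on $(\Pi,X)$; I will show $(p,\theta)=(p',\theta')$.

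First I would recover $p$ from the marginals of the scores. By independence, $X_j$ is marginally Binomial$(M,p_j)$, and the Binomial distribution with known trials parameter $M$ is identifiable in its success probability since its PMF is a polynomial of degree $M$ in $p_j$ and therefore determined by its $M+1$ values. Hence $p_j=p_j'$ for each $j$, so $p=p'$, and consequently $\pi_0=\text{Order}(p)=\text{Order}(p')=\pi_0'$ (ties being handled by a fixed convention, or observed to be irrelevant when $p$ itself is already pinned down).

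Second I would identify $\theta$ from the marginal distribution of $\Pi$, which under both parameterizations is the partial Mallows model of Equation \ref{Eq:Mallows} with the same consensus ranking $\pi_0$. This is a one-parameter exponential family with natural parameter $-\theta$ and sufficient statistic $d_{R,J}(\Pi,\pi_0)$, so identifiability reduces to showing the sufficient statistic is non-constant on the support. For $R\geq 1$ and $J\geq 2$ this is immediate: $d_{R,J}(\pi_0,\pi_0)=0$ while a ranking obtained by swapping two adjacent ranked objects in $\pi_0$ has distance at least $1$. A more concrete alternative is to use the explicit identity $P[\Pi=\pi_0\mid\pi_0,\theta]=1/\psi_{R,J}(\theta)$ and observe, via the geometric-sum rewriting $\frac{1-e^{-\theta(J-j+1)}}{1-e^{-\theta}}=\sum_{k=0}^{J-j}e^{-k\theta}$ of each factor in Equation \ref{Eq:psi}, that $\psi_{R,J}(\theta)$ is strictly monotone in $\theta>0$ whenever some factor has $J-j\geq 1$.

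The main obstacle is the second step: one must verify that the ranking component carries enough information to pin down $\theta$, which requires the sufficient statistic $d_{R,J}(\cdot,\pi_0)$ to be genuinely non-degenerate on the support of $\Pi$. This is the only place where an edge-case check is needed ($J=1$ is pathological, since only one ranking exists and $\theta$ is formally unidentified there), but the argument goes through cleanly for all $J\geq 2$. The remaining ingredients---Binomial identifiability and monotonicity of the Mallows normalizer---are standard and I would dispatch them quickly.
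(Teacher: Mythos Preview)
Your proof is correct and follows the same two-stage strategy as the paper: first pin down $p$ from the score component, then $\theta$ from the ranking component once $\pi_0=\text{Order}(p)$ is fixed. The only difference is packaging---the paper manipulates the joint log-density directly and varies $x$ and $\pi$ in turn, whereas you pass to the marginals via the independence in Equation~\ref{Eq:model} and invoke standard Binomial and one-parameter exponential-family identifiability; your argument for the Mallows piece (non-degeneracy of $d_{R,J}$, strict monotonicity of $\psi_{R,J}$) is a bit more explicit than the paper's, and your remark that $J=1$ is a degenerate edge case is a useful caveat the paper does not mention.
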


\begin{proof}
Let $P_{\theta,p}$ denote the probability distribution of scores $x$ and rankings $\pi$ under a Mallows-Binomial($p,\theta$) model. Let $\theta_1,\theta_2>0$ and $p_1,p_2\in[0,1]^J$ such that $P_{\theta_1,p_1}=P_{\theta_2,p_2}$. Then,
\begin{align*}
    P_{\theta_1,p_1}&=P_{\theta_2,p_2}\\
    \iff &\frac{e^{-\theta_1 d_{R,J}(\pi,\text{Order}(p_1))}}{\psi_{R,J}(\theta_1)} \prod_{j=1}^J p_{1j}^{x_{j}}(1-p_{1j})^{M-x_{j}} = \frac{e^{-\theta_2 d_{R,J}(\pi,\text{Order}(p_2))}}{\psi_{R,J}(\theta_2)} \prod_{j=1}^J p_{2j}^{x_{j}}(1-p_{2j})^{M-x_{j}}\\
    \iff& 0 = \big(\theta_2d_{R,J}(\pi,\text{Order}(p_2)) - \theta_1d_{R,J}(\pi,\text{Order}(p_1))\big) + \log\frac{\psi_{R,J}(\theta_2)}{\psi_{R,J}(\theta_1)} + \\
    & \ \ \ \ \ \ \ \sum_{j=1}^J\big[ x_j\log \frac{p_{1j}}{p_{2j}} + (M-x_j)\log\frac{1-p_{j1}}{1-p_{j2}}\big].
\end{align*}
For each $j = 1,\dots,J$, and for any arbitrary $x_j$, the expression $x_j\log \frac{p_{1j}}{p_{2j}} + (M-x_j)\log\frac{1-p_{j1}}{1-p_{j2}}=0$ if and only if $p_{1j}=p_{2j}$. Thus, for any arbitrary collection $x_1,\dots,x_J$, the final sum is 0 if and only if $p_1=p_2$. Continuing under the assumption that $p_1=p_2$, we have Order($p_1$)=Order($p_2$) and thus,
\begin{align*}
     P_{\theta_1,p_1}=P_{\theta_2,p_2} \iff& 0 = d_{R,J}(\pi,\text{Order}(p_1))(\theta_2 - \theta_1) + \log\frac{\psi_{R,J}(\theta_2)}{\psi_{R,J}(\theta_1)}
\end{align*}
which for any arbitrary $\pi$ is 0 if and only if $\theta_1=\theta_2$. Therefore, the Mallows-Binomial model is identifiable.
\end{proof}

\subsubsection{Bias and Consistency}

Bias and consistency of maximum likelihood estimators (MLE) in the Mallows and Binomial distributions is a natural starting point to examine bias and consistency of the MLE in
the combined model. \cite{Tang2019} demonstrated that in the Mallows model, the MLE $\hat\pi_0$ of the consensus ranking $\pi_0$ is consistent whereas its bias is difficult to quantify due to the categorical nature of the parameter, and $\hat\theta$ is biased upward for any number of samples but consistent as the number of judges, $I$, increases to infinity. As a univariate exponential family, $\hat p$ in a Binomial($M,p$) distribution with $M$ known is unbiased and consistent. Therefore, we expect Mallows-Binomial($p,\theta$) MLEs $\hat p$ and $\hat\theta$ to be consistent but potentially biased.

It is straightforward to prove that $\hat\theta$ is biased upward since $\hat\theta=\infty$ whenever all rankings are identical to $\hat\pi_0=\text{Order}(\hat p)$, which occurs with positive probability for any $\theta\in(0,\infty)$. However, excluding such situations, bias is difficult to demonstrate. An illustration of minimal but present bias can be found in Appendix \ref{sect:A_bias}. On the other hand, Proposition \ref{prop:consistency} gives consistency of the MLEs $(\hat p,\hat\theta)$ in the Mallows-Binomial model.

\begin{proposition}
\label{prop:consistency}
Suppose $M$, $J$, and $R$ are known and let $\theta\in(0,\infty)$ and $p\in(0,1)^J$. Let $(X,\Pi)_I$ denote a sample of $I$ independent and identically distributed samples from a Mallows-Binomial($p,\theta$) distribution, and $(\hat p,\hat\theta)_I$ be the maximum likelihood estimators based on that sample. Then, $(\hat p,\hat\theta)_I\overset{p}{\to} (p,\theta)$.
\end{proposition}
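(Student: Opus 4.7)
The plan is to apply the classical Wald-type M-estimator consistency argument. Writing $(p_0,\theta_0)$ for the data-generating parameter appearing in the proposition's hypothesis, the normalized log-likelihood
\[
M_I(p,\theta) \;:=\; \frac{1}{I}\sum_{i=1}^I \log f(X_i,\Pi_i;\, p,\theta)
\]
converges uniformly on a compact neighborhood of $(p_0,\theta_0)$ to its expectation $M(p,\theta)$, which by the information inequality combined with identifiability (Proposition~\ref{prop:ident}) is uniquely maximized at $(p_0,\theta_0)$. Uniform convergence plus uniqueness of the limiting argmax then yields $(\hat p_I,\hat\theta_I)\overset{p}{\to}(p_0,\theta_0)$ by a standard result (e.g.\ Theorem 5.7 of van der Vaart, 1998).

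For the uniform convergence step I would exploit the factorization in Equation~\ref{Eq:model}. The Binomial contribution $X_{ij}\log p_j + (M-X_{ij})\log(1-p_j)$ is smooth in $p_j$, so the SLLN applied to each sufficient statistic $S_j=\sum_i X_{ij}$, together with continuity, gives uniform convergence of the Binomial part on any compact subset of $(0,1)^J$. The Mallows contribution $-\theta\,d_{R,J}(\pi,\mathrm{Order}(p)) - \log\psi_{R,J}(\theta)$ depends on $p$ only through the discrete function $\mathrm{Order}(p)$, which is piecewise constant on the cells $\{p:\mathrm{Order}(p)=\sigma\}$. Restricting attention to a small open neighborhood $U$ of $p_0$ contained in a single such cell (so the consensus ranking is fixed at $\pi_0$), the Mallows contribution depends smoothly on $\theta$ alone, and the SLLN applied to $I^{-1}\sum_i d_{R,J}(\Pi_i,\pi_0)$ yields uniform convergence in $\theta$ on any compact subset of $(0,\infty)$.

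The main obstacle is non-compactness of the parameter space $(0,1)^J\times(0,\infty)$: I must show the MLE lies in a compact set with probability tending to $1$. The Binomial marginals keep $\hat p_j$ bounded away from $\{0,1\}$ because $S_j/(IM)\overset{p}{\to} p_{0,j}\in(0,1)$. The boundary $\hat\theta=\infty$ is ruled out because it requires all observed distances $d_{R,J}(\Pi_i,\hat\pi_0)$ to vanish, an event of probability tending to $0$ whenever $\theta_0<\infty$; the boundary $\hat\theta\to 0$ is ruled out by comparing the MLE log-likelihood to its value at the fixed reference $\theta=\theta_0$, which remains strictly larger by the information inequality. Moreover, when the entries of $p_0$ are distinct, $S_j/I\overset{p}{\to} Mp_{0,j}$ implies the empirical ordering matches $\pi_0$ eventually, placing $\hat p$ within the single ordering cell on which the Mallows contribution is continuous; when $p_0$ has ties, the likelihood is invariant across all orderings consistent with $p_0$ and the argument still pins down the continuous parameters $(\hat p,\hat\theta)$. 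With the MLE thus confined to a compact neighborhood of $(p_0,\theta_0)$, the uniform convergence and uniqueness established above complete the proof.
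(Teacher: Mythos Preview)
Your approach and the paper's share the same scaffold: both invoke Theorem~5.7 of van der Vaart (uniform convergence of the averaged log-likelihood to its expectation, plus well-separation of the maximizer via the identifiability in Proposition~\ref{prop:ident}). Where you differ is in the technical handling. The paper deals with non-compactness by simply restricting the parameter space \emph{a priori} to a compact box bounded away from the boundary of $(0,1)^J\times(0,\infty)$, and then verifies uniform convergence globally on that box via a stochastic Lipschitz condition (Corollary~2.2 of Newey, 1991). You instead argue that the MLE lands in a compact set with probability tending to~1, and you handle the discontinuity of $\mathrm{Order}(p)$ by restricting attention to a single ordering cell around $p_0$. Your treatment of compactness is more honest to the stated hypotheses, and the cell-based localization sidesteps the discontinuity of $\mathrm{Order}(p)$ more cleanly than the paper's global Lipschitz bound, which is silent on what happens at cell boundaries.

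There is one soft spot worth tightening. You write that ``$S_j/I\to Mp_{0,j}$ implies the empirical ordering matches $\pi_0$ eventually, placing $\hat p$ within the single ordering cell.'' The first clause concerns the ordering of the \emph{sample means}; it does not automatically say anything about the ordering of the joint MLE $\hat p$, which can be pulled away from the unconstrained Binomial maximizer by the Mallows term. To close this, observe that the Mallows contribution to $M_I$ is uniformly bounded over $p$ (since $d_{R,J}\le J(J-1)/2$) and over $\theta$ in any compact set, whereas the Binomial contribution penalizes any $p$ outside the correct ordering cell by a fixed positive amount once the sample means are close to $p_0$. Hence for large $I$ the overall likelihood is strictly higher inside the correct cell, forcing $\mathrm{Order}(\hat p)=\pi_0$ with probability tending to~1; after that your local uniform-convergence argument goes through.
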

A technical proof of Proposition \ref{prop:consistency} is relegated to Appendix \ref{sect:A_consist}.

Since the magnitude of bias and the rate of convergence are challenging to derive analytically, we explore these concepts through simulation. We ran simulations for different values of the model constants: the number of judges $I\in\{5,20,80\}$, maximum integer score $M\in\{10,20,40\}$, number of objects $J\in\{6,12,18\}$, and size of ranking $R\in\{6,12,18 | R\leq J\}$. Then, for each unique combination of $I$, $M$, $J$, and $R$, we performed 20 simulations for each value of $\theta\in\{1,2,3\}$, where in  each simulation we sampled a new object quality vector $p$ from a Uniform$[0,1]^J$. After examining results separately for different values of $I,M,J$, and $R$, we noticed minimal differences based on $M$ or $R$. Therefore, we present aggregated results for given $I$ and $J$ in Figure \ref{fig:bias_cons}.

\begin{figure}[h!]
    \centering
    \includegraphics[width=\textwidth]{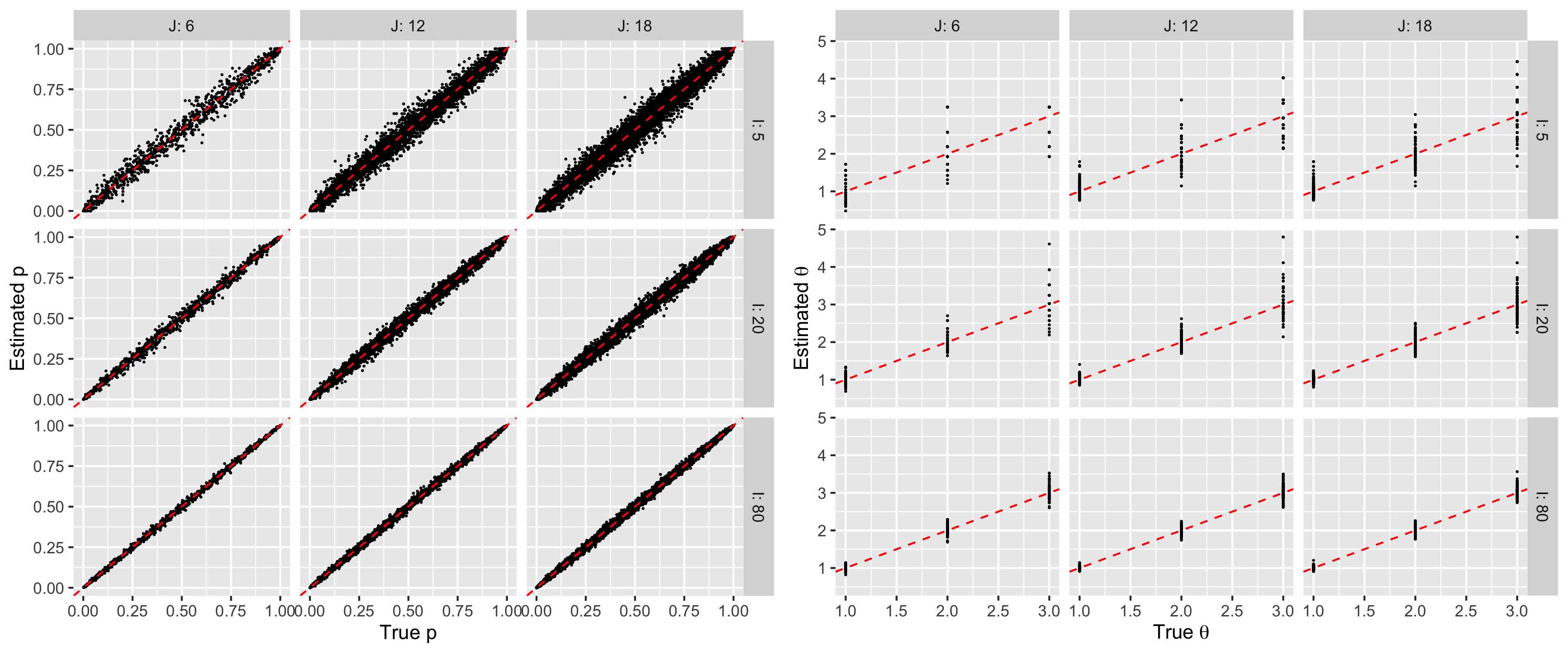}
    \caption{Simulated bias and consistency of $\hat p$ (\textit{left}) and $\hat\theta$ (\textit{right}) in Mallows-Binomial model across different values of $I$ and $J$. Red dotted lines represent perfect estimation accuracy. Results are aggregated over $M$ and $R$; the right panel excludes cases where all sampled ranking data was uniform ($\hat\theta=\infty$).}
    \label{fig:bias_cons}
\end{figure}

These simulations indicate that model parameters appear unbiased and consistent in $I$. The parameters are at worst minimally biased and exhibit estimation uncertainty similar in scale to that when estimating Binomial probabilities or Mallows scale parameters in independent models, even for modest numbers of judges, $I$.

\subsubsection{Standard Errors}

We propose estimating standard errors via the nonparametric bootstrap \citep{Efron1994}. It can be challenging to prove the asymptotic validity of bootstrap estimators in a complex setting such as this, and such results will not be provided here. However, the consistency of the MLEs $(\hat p,\hat\theta)$ in tandem with the asymptotic normality of the estimators $\hat p$ in Binomial models and $\hat\theta$ in standard Mallows models \citep{Fligner1986} suggests the bootstrap will be valid in the Mallows-Binomial model. Due to the presence of $J+1$ parameters, we recommend a relatively large number of bootstrap samples in order to obtain a proper empirical distribution of the estimators.  Due to the constrained parameter domains, we suggest constructing confidence intervals for $\hat p$ and $\hat\theta$ using a percentile-based approach rather than one based on normality assumptions.
As in the case of bias and consistency, analytic results for calculating standard errors are challenging and beyond the scope of this paper.

Estimating uncertainty in the estimated consensus ranking is paramount to many preference aggregation tasks, such as in the grant panel review application seen in Section \ref{sect:realdata}. However, bootstrapped confidence intervals for $\hat p$ and $\hat \theta$ do not directly provide confidence intervals for the estimated consensus  ranking of objects, $\hat\pi_0$. To create confidence intervals for consensus rankings, we again propose using the nonparametric bootstrap. Specifically, for each bootstrap sample and the associated MLE, the order of the estimated object quality parameters can be treated as one observation in the empirical distribution of the estimated consensus ranking. We can subsequently form confidence intervals from the empirical distribution in a straightforward manner. Conveniently, the same bootstrap samples used when creating confidence intervals for $\hat p$ and $\hat \theta$ may be used again here for computational efficiency.

\section{Estimation}\label{sect:estimation}

Analytic solutions for the maximum likelihood estimator (MLE) of a Mallows distribution do not exist. Even more, finding the MLE is an NP-hard problem \citep{Meila2012}. Difficulty arises from the discrete consensus ranking, which may be one of $J!$ unique possibilities. Although the Mallows-Binomial model contains $J+1$ \textit{continuous} parameters, $(p,\theta)\in[0,1]^J\times \mathbb{R}_{>0}$, the \textit{discrete} order of $p$ affects the likelihood. Thus, frequentist estimation is both a continuous and discrete problem.

The discrete aspect of estimation in the Mallows-Binomial model allows us to leverage existing algorithms from the Mallows model. As we will demonstrate, the inclusion of scores in the proposed model generally speeds up estimation as scores provide information on the strength of differences in object qualities, beyond their induced ranking. Still, exact computation of the MLE is difficult, or even intractable, as the number of objects increases. In this section, after some preliminaries, we propose exact and approximate algorithms to estimate the Mallows-Binomial MLEs. All algorithms are implemented in the \texttt{R} package \texttt{rankrate} \citep{rankrate}, which is publicly available on CRAN.

\subsection{Preliminaries}

Suppose $I$ judges assess a collection of $J$ objects using integer scores in the range $\{0,1,\dots,M\}$ and rankings of length $R$, such that $R\leq J$, where $M$, $J$, and $R$ are all known and fixed integers. We assume that each judge's ranking and scores are drawn independently from the same Mallows-Binomial($p,\theta$) distribution, where $p$ and $\theta$ are unknown and will be estimated via the method of maximum likelihood. Let $\pi_0=\text{Order}(p)$, $\Pi=\{\Pi_i\}_{i=1,\dots,I}$ denote the judges' rankings and $X=\{X_{ij}\}_{i=1,\dots,I}^{j=1,\dots,J}$ denote the judges' scores.

We begin by stating a useful property of the Kendall distance: For any two specific rankings $\pi_1,\pi_2$ of length $R$ and $J$, respectively, the Kendall distance can be written as 
\begin{align}
d_{R,J}(\pi_1,\pi_2) = \sum_{j=1}^R V_j(\pi_1,\pi_2),\label{Eq:KendallDecomp}
\end{align}
where $V_1(\pi_1,\pi_2)$ is the number of adjacency swaps needed to place the first object of $\pi_1$ in the first position of $\pi_2$, $V_2(\pi_1,\pi_2)$ is the number of additional adjacency swaps needed to place the second object of $\pi_1$ in the second position of $\pi_2$, and so on \citep{Fligner1986}. Note that each $V_j\in\{0,\dots,J-j\}$.

Then, the joint loglikelihood of the scores $X$ and rankings $\Pi$ is,

\begin{align}
&\ell(p,\theta|X=x,\Pi=\pi)\\
&= \log\prod_{i=1}^I\Bigg[\frac{e^{- \theta\sum_{j=1}^R V_j(\pi_i,\pi_0)}}{\psi_{R,J}(\theta)} \prod_{j=1}^J {M\choose x_{ij}}p_j^{x_{ij}}(1-p_j)^{M-x_{ij}}\Bigg]\nonumber \\
&= \sum_{i=1}^I \Bigg[ -\theta\sum_{j=1}^R V_j(\pi_i,\pi_0)-\log\psi_{R,J}(\theta)+\sum_{j=1}^J \Big[\log {M\choose x_{ij}}+x_{ij}\log p_j + (M-x_{ij})\log(1-p_j)\Big] \Bigg].
\end{align}
The maximum likelihood estimators, $(\hat p,\hat\theta)$, are therefore,

\begin{align}
    (\hat p,\hat\theta) &= \underset{p,\theta}{\arg\max} \sum_{i=1}^I \Bigg[ -\theta\sum_{j=1}^R V_j(\pi_i,\pi_0)-\log\psi_{R,J}(\theta)+\sum_{j=1}^J \Big[x_{ij}\log p_j + (M-x_{ij})\log(1-p_j)\Big] \Bigg]\nonumber\\
    &= \underset{p,\theta}{\arg\min} \Bigg\{ \theta\sum_{j=1}^R \overline{V}_j\Bigg\}+\Bigg\{\log\psi_{R,J}(\theta)\Bigg\}+\Bigg\{\sum_{j=1}^J \overline{x}_{j}\log \frac1{p_j} + (M-\overline{x}_{j})\log\frac1{1-p_j}\Bigg\} \nonumber \\
    &\equiv \underset{p,\theta}{\arg\min} \  f(p,\theta), \label{Eq:obj}
\end{align}
where $\overline{V}_j = I^{-1}\sum_{i=1}^I V_j(\pi_i,\pi_0)$ and $\overline{x}_j=I^{-1}\sum_{i=1}^I x_{ij}$.
As no analytic solution exists, the function $f$ within Equation \ref{Eq:obj} will be referred to interchangeably as a ``cost" or ``objective" function to be minimized via numerical optimization.

\subsection{Exact Algorithms based on A*}
\label{sect:exact}

The MLE $(\hat p,\hat \theta)$ induces an ordering of the true underlying object qualities, $\hat\pi_0=\text{Order}(\hat p)$. To find the MLE, we flip the problem around. Instead of optimizing over $p$ and $\theta$ directly, we first obtain $\hat\pi_0$ and then optimize for $\hat p$ and $\hat \theta$ under the constraints implied by $\hat\pi_0$ on $\hat p$.

\cite{Mandhani2009} and \cite{Meila2012} observed for the Mallows model that $\hat\pi_0$ could be estimated exactly using an A* algorithm. A* is a standard graph traversal algorithm developed by \cite{Hart1968}. Given a graph, A* finds the shortest path between a starting node and any terminal node. The algorithm requires a \textit{cost function} that measures the exact cost to get from the starting node to any other node, and a \textit{heuristic function} that estimates the remaining cost from any node to the nearest terminal node. The heuristic function is called \textit{admissible} when it guarantees a lower bound on the remaining cost. A* provably yields the shortest path when the heuristic is admissible. A trivial, admissible heuristic always returns 0, but results in an inefficient graph search. Oppositely, a maximal or near-maximal (``tight") admissible heuristic may reduce the number of nodes traversed during the search but be burdensome to compute and slow the overall algorithm.

A* algorithms traditionally define separate cost and heuristic functions but these functions are always used together \citep{Hart1968}. Thus, at each node the algorithm sums the cost and heuristic functions to lower bound the total cost possible given the current node. Due to the interdependent nature of the model parameters, we use an equivalent method of defining a single, admissible \textit{total cost heuristic} function which outputs a guaranteed lower bound on the total cost possible at any node in the graph. In other words, this single function is the sum of the usual cost and heuristic functions.

We propose two A* algorithms to calculate the exact MLE of the Mallows-Binomial model. Both algorithms use the same graph as in \cite{Mandhani2009} and \cite{Meila2012} but differ based on their admissible total cost heuristic functions; the first is crude but fast to compute, the second is tight but slow. We compare their overall speed in Section \ref{sect:simulation}.

\subsubsection*{Graph}

We define the graph $G$ as a tree that progressively adds one object to the ranking as you move down its branches. To specify a single starting node, we let the zero$^\text{th}$ layer of $G$ be empty. In the first layer, there is a node for each object in the collection. Traversing to any specific node in the first layer constrains the corresponding object to have the lowest-valued quality parameter (but does not specify any relationships among the remaining objects). For example, at node $n=(1)$ when $J=3$, the quality parameters are required to satisfy $p_1\leq p_2$ and $p_1\leq p_3$, but no relationship is specified between $p_2$ and $p_3$. Subsequent layers are successively formed from each node by adding a unique branch for each object not yet in the path to the node. Nodes in the $(J-1)^\text{th}$ layer are terminal as the last object is implied. For example, when $J=3$ the node $n=(3,2)$ is terminal as it implies the complete ordering of objects $(3,2,1)$. An example search graph when $J=3$ is shown in Figure \ref{fig:mandhani tree} (adapted from \cite{Mandhani2009}).

\begin{figure}[h!!]
    \centering
    \begin{tikzpicture}[level distance=1cm,
      level 1/.style={sibling distance=3cm},
      level 2/.style={sibling distance=1.5cm}]
      \node {$\emptyset$}
        child {node {1}
          child {node {1,2}}
          child {node {1,3}}
        }
        child {node {2}
          child {node {2,1}}
          child {node {2,3}}
        }
        child {node {3}
        child {node {3,1}}
          child {node {3,2}}
        };
    \end{tikzpicture}
    \caption{Graph for A* Search Algorithm with $J=3$ objects.}
    \label{fig:mandhani tree}
\end{figure}
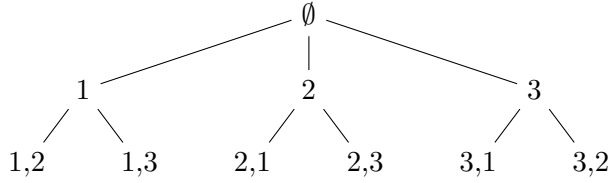

\subsubsection*{Crude Total Cost Heuristic}

Before stating our first total cost heuristic, we define a useful quantity based on rankings only: Let $Q$ be a $J\times J$ matrix such that each entry $Q_{uv}$, $u,v\in\{1,\dots, J\}$, is 
\begin{equation}
\label{Eq:Q}
    Q_{uv} = \frac{\sum_{i=1}^I I\{\text{object }u \text{ is ranked strictly higher than object } v \text{ in } \pi_i\}}{I}
\end{equation}

When $u=v$, it follows that $Q_{uv}=0$. If a comparison between objects cannot be deduced from any given ranking (due to partial rankings), we define the corresponding term in the numerator to be zero but do not change the denominator. Thus, $Q_{uv}+Q_{vu}=1$ whenever a strict ordering can be deduced between objects $u,v$ for all judges and is less than one otherwise. 
We are now ready to define the crude total cost heuristic.

\begin{definition}[Crude Total Cost Heuristic]\label{def:crude}

Let $n\in G$ such that $n=(n_1,\dots,n_k)$, $1\leq k\leq J-1$, where $n_1,\dots,n_k$ indicate unique objects in the collection $\{1,\dots,J\}$. Then, the crude total cost heuristic, $g_{c}(n): G \to \mathbb{R}$, is
\begin{align*}
    g_c(n) &=  \Big\{\hat\theta^n L\Big\} + \Big\{ \log\psi_{R,J}(\hat\theta^n)\Big\} + \Big\{\sum_{j=1}^J \overline{x}_{j}\log \frac{1}{\hat p^n_j} +(M-\overline{x}_{j})\log\frac{1}{1-\hat p^n_j}\Big\}\\
    L &= \Big(\sum_{\substack{v\in\{1:k\}\\u \in \{(v+1):J\}}}Q_{n_un_v}\Big) + \Big(\sum_{u,v\in\{(k+1):J\}} \min(Q_{n_un_v},Q_{n_vn_u})\Big)\\
    \hat\theta^n &= \underset{\theta}{\arg\min} \Big[\theta L +  \log\psi_{R,J}(\theta)\Big]\\
    \hat p^n &= \underset{p}{\arg\min} \Big[ \sum_{j=1}^J \overline{x}_{j}\log \frac{1}{p_j} +(M-\overline{x}_{j})\log\frac{1}{1- p_j}\Big]  \ \ \  \text{ s.t. } p_{n_1}\leq \dots \leq p_{n_k}, \  p_{n_k}\leq p_{n_l}, l> k.
\end{align*}
\end{definition}

The crude total cost heuristic may be seen as an extension of the quantity $L$ from \cite{Meila2012}. We prove that $g_c$ is admissible in Proposition \ref{prop:heur}.

\begin{proposition}
\label{prop:heur}
Under the conditions of Definition \ref{def:crude},
$$g_c(n) \leq \underset{p,\theta}{\arg\min} \ f(p,\theta) \ \ \text{ such that } \ \ p_{n_1}\leq \dots \leq p_{n_k}, \  p_{n_k}\leq p_{n_l}, l> k$$
and therefore $g_c(n)$ is admissible.
\end{proposition}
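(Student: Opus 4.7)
The plan is to decompose $f(p,\theta)$ into three pieces, bound each separately, and use the defining properties of $\hat p^n$ and $\hat\theta^n$ to recognize the sum of the lower bounds as exactly $g_c(n)$.

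First, fix any $(p,\theta)$ with $p$ in the constraint set $\mathcal{C} = \{p : p_{n_1} \leq \cdots \leq p_{n_k},\ p_{n_k} \leq p_{n_l}\text{ for }l > k\}$. The portion $\sum_{j=1}^J \overline{x}_j \log(1/p_j) + (M-\overline{x}_j)\log(1/(1-p_j))$ of $f$ depends only on $p$, and by definition $\hat p^n$ minimizes precisely this expression over $\mathcal{C}$. This gives the third bracket of $g_c(n)$ as a lower bound immediately.

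The key combinatorial step is to show that for every complete ranking $\pi_0$ consistent with the constraints, meaning $\pi_0$ begins with $n_1, \ldots, n_k$ in that order,
\begin{equation*}
\sum_{j=1}^R \overline{V}_j(\pi_0) \;\geq\; L.
\end{equation*}
By equation \ref{Eq:Kendall} and averaging over judges, the left-hand side equals $\sum_{\{u,v\}:\, u\prec_{\pi_0} v} Q_{vu}$, summed over unordered pairs of distinct objects, with $Q$ defined in \ref{Eq:Q}. I would split this sum into two groups of pairs: (i) pairs that touch the prefix $\{n_1,\ldots,n_k\}$, and (ii) pairs contained in the suffix $\{n_{k+1},\ldots,n_J\}$. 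For group (i), the constraints force the ordering in $\pi_0$ between the two objects, so the contribution is exactly $Q_{n_u n_v}$ and matches the first sum of $L$ term by term. For group (ii), whichever order $\pi_0$ picks for the pair, the contribution is one of $Q_{n_u n_v}$ or $Q_{n_v n_u}$, hence at least $\min(Q_{n_u n_v}, Q_{n_v n_u})$, which recovers the second sum of $L$.

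Given this inequality and $\theta > 0$, we get $\theta \sum_{j=1}^R \overline{V}_j(\pi_0) + \log\psi_{R,J}(\theta) \geq \theta L + \log\psi_{R,J}(\theta)$, and the right-hand side is minimized over $\theta > 0$ at $\hat\theta^n$ by definition, yielding the first two brackets of $g_c(n)$. Summing the two bounds produces $f(p,\theta) \geq g_c(n)$ for every admissible $(p,\theta)$, which is exactly the admissibility claim. The main obstacle is the combinatorial step: verifying the pairwise lower bound rigorously in the presence of partial rankings, where $Q_{n_u n_v} + Q_{n_v n_u}$ may be strictly less than one, and reconciling whether the second summation in the definition of $L$ is intended over ordered or unordered pairs. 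Once the pairwise decomposition of the averaged Kendall distance is settled, the rest of the argument is mechanical substitution using the two defining subproblems for $\hat p^n$ and $\hat\theta^n$.
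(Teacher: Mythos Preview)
Your proposal is correct and follows essentially the same route as the paper: separate the score term via the definition of $\hat p^n$, prove $L \le \sum_{j=1}^R \overline{V}_j$ by splitting pairs into prefix-touching and suffix-only, and then invoke the definition of $\hat\theta^n$ to handle the ranking terms. The only cosmetic difference is that you work directly with the pairwise form of the averaged Kendall distance, whereas the paper phrases the same inequality through the $V_j$ decomposition and defers part of the argument to \cite{Mandhani2009}; your concern about ordered versus unordered pairs in the suffix sum is a notational wrinkle, not a gap.
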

\begin{proof}
$g_c(n)$ consists of three terms which can each be mapped to a unique term in $f$. We prove the lower bound by proving (a) the first and second terms of $g$ are a lower bound on the corresponding terms in $f$, and (b) the third term of $g$ is a lower bound on the corresponding term in $f$.
\begin{itemize}
    \item[(a)] We first prove that $L\leq \sum_{j=1}^R \overline{V}_j$. Following closely the logic of \cite{Mandhani2009},
    \begin{align*}
        L &= \sum_{\substack{v\in\{1:k\}\\u \in \{(v+1):k\}}}Q_{n_un_v} + \sum_{u,v\in\{(k+1):J\}} \min(Q_{n_un_v},Q_{n_vn_u})\\
        &= \sum_{j\in\{1:k\}} \overline{V}_j + \sum_{u,v\in\{(k+1):J\}} \min(Q_{n_un_v},Q_{n_vn_u})\\
        &\leq \sum_{j\in\{1:k\}} \overline{V}_j + \sum_{j\in\{(k+1) : J\}} \overline{V}_j\\
        &= \sum_{j=1}^R \overline{V}_j
    \end{align*}
    The second line above holds by definition of $\overline{V}_j$ and the third line holds since one of $Q_{n_un_v},Q_{n_vn_u}$ must appear in the expression $\sum_{j\in\{(k+1):J\}}\overline{V}_j$. The fourth and final line holds since each $\overline{V}_j=0$ when $j>R$ definitionally. We complete (a) by again referencing \cite{Mandhani2009}, who proved that given $L$, $\hat\theta^n$ lower bounds the first two terms of $f$.
    \item[(b)] Since $\hat p^n$ is defined as the $\arg\min$ over $p$ for the third term of $f$ subject to the bare minimum constraints imposed by $n$, the third term of $g$ must lower bound the total cost. This is because as we traverse down the graph from $n$, only additional constraints may be imposed. Each additional constraint cannot lower the objective function, leading to a lower bound.
\end{itemize}
Therefore, $g_c(n)$ is an admissible total cost heuristic.
\end{proof}

Note that $g_c$ is suitably called crude because it is not necessarily a tight lower bound. Instead, the function independently lower bounds components of the likelihood corresponding to the Mallows and Binomial models. However, it is easy and quick to compute $L$ using matrix algebra, $\hat\theta^n$ via univariate optimization, and $\hat p^n$ via strictly convex optimization in a highly-constrained subspace of the $J$-dimensional unit hypercube.

\subsubsection*{LP Total Cost Heuristic}

In the crude total cost heuristic, it can be seen that the lower bound on the cost corresponding to the scores cannot be improved independently of the rankings, given $n$. A comparable statement is not true for the cost corresponding to rankings. The LP total cost heuristic makes the latter component tighter.

As a brief aside, the MLE of $\pi_0$ in the Mallows model is also the solution to the Kemeny ranking problem \citep{Meila2012}. \cite{Conitzer2006} proposed an algorithm to solve the Kemeny ranking problem based on an LP relaxation of the linear integer program that returns the minimum weight feedback edge set. Intuitively, the result can be understood as follows: In the crude lower bound, each pair of objects $u,v$ must be ranked such that $u$ is before $v$ or $v$ is before $u$. It does not take into account more complex relationships. For example, if $u$ is before $v$ and $v$ is before an object $w$, the lower bound would still illogically allow $w$ to be before $u$. The algorithm of \cite{Conitzer2006} removes this possibility. \cite{Mandhani2009} applied their result to an A* search algorithm for the Mallows model. In this paper, we extend this result to the Mallows-Binomial case.

\begin{definition}[LP Total Cost Heuristic]
\label{def:lp}
Let $n\in G$ such that $n=(n_1,\dots,n_k)$, $1\leq k\leq J-1$, where $n_1,\dots,n_k$ indicate unique objects in the collection $\{1,\dots,J\}$. Then, the LP Total Cost Heuristic, $g_{lp}(n): G \to \mathbb{R}$, is
\begin{align*}
    g_{lp}(n) &=  \Big\{\hat\theta^n L_{LP}\Big\} + \Big\{ \log\psi_{R,J}(\hat\theta^n)\Big\} + \Big\{\sum_{j=1}^J \overline{x}_{j}\log \frac{1}{\hat p^n_j} +(M-\overline{x}_{j})\log\frac{1}{1-\hat p^n_j}\Big\}\\
    L_{LP} &\text{ as defined in \cite{Conitzer2006}}\\
    \hat\theta^n &= \underset{\theta}{\arg\min} \Big[\theta L_{LP} +  \log\psi_{R,J}(\theta)\Big]\\
    \hat p^n &= \underset{p}{\arg\min} \Big[ \sum_{j=1}^J \overline{x}_{j}\log \frac{1}{p_j} +(M-\overline{x}_{j})\log\frac{1}{1- p_j}\Big]  \ \ \  \text{ s.t. } p_{n_1}\leq \dots \leq p_{n_k}, \  p_{n_k}\leq p_{n_l}, l> k
\end{align*}
\end{definition}

Note that $g_{lp}$ is identical to $g_c$ except for the replacement of $L$ with $L_{LP}$. We prove that $g_{lp}$ is a tighter lower bound than $g_c$ and admissible via Proposition \ref{prop:lpheur}.

\begin{proposition}
\label{prop:lpheur}
Under the conditions of Definition \ref{def:lp},
$$g_{c}(n) \leq g_{lp}(n)$$
for all nodes $n\in G$. Furthermore,
$$g_{lp}(n) \leq \underset{p,\theta}{\arg\min} f(p,\theta) \ \ \text{ such that } \ \ p_{n_1}\leq \dots \leq p_{n_k}, \  p_{n_k}\leq p_{n_l}, l> k$$
and therefore $g_{lp}(n)$ is admissible.
\end{proposition}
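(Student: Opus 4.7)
The plan is to exploit the fact that $g_{lp}$ and $g_c$ have the same algebraic form, differing only in the substitution of $L_{LP}$ for $L$, and that the score-cost component and the $\log\psi_{R,J}$ normalization are identical between the two heuristics. So the entire proof reduces to understanding the relationship between $L$, $L_{LP}$, and $\sum_{j=1}^R \overline{V}_j$.

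First I would establish the inequality $L \leq L_{LP}$. The quantity $L$ already encodes the constraints that the ranking must place each $n_v$ ($v \leq k$) ahead of every object not yet fixed, but for the unconstrained pairs $(u,v)$ with $u,v \in \{(k+1):J\}$ it simply takes $\min(Q_{n_u n_v}, Q_{n_v n_u})$, i.e.\ it chooses the cheaper orientation independently for each pair without enforcing transitivity. The LP relaxation of \cite{Conitzer2006} is a minimization over the same pairwise orientation variables but with additional triangle-inequality constraints imposed. Since $L_{LP}$ minimizes the same objective over a strictly smaller feasible set, we get $L \leq L_{LP}$.

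Next I would pass this inequality through the $\hat\theta^n$ optimization. Writing $g_c(n) - [\text{score cost}] = \min_\theta\bigl[\theta L + \log\psi_{R,J}(\theta)\bigr]$ and similarly for $g_{lp}$, the integrand for $g_{lp}$ pointwise dominates the integrand for $g_c$ on $\theta>0$, so the same relationship survives after minimizing in $\theta$. Because the score-cost term is literally identical in both heuristics (the $\hat p^n$ optimization only depends on $n$, not on which feedback-set relaxation is used), adding it to both sides yields $g_c(n) \leq g_{lp}(n)$.

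For the admissibility half, I would reuse the argument of Proposition \ref{prop:heur} almost verbatim. The score-cost component of $g_{lp}$ was already shown in part (b) of that proof to lower-bound the corresponding score term of $f$ under the ordering constraints induced by $n$; this step does not involve $L$ at all. For the ranking component, by \cite{Conitzer2006} $L_{LP}$ is a valid LP relaxation of the Kemeny/minimum feedback arc set value, so $L_{LP} \leq \sum_{j=1}^R \overline{V}_j$ for any consensus ranking consistent with the partial ordering specified by $n$. Plugging this into the $\hat\theta^n$ subproblem exactly as in part (a) of Proposition \ref{prop:heur} yields the required lower bound on the Mallows portion of $f$. Summing the two component bounds gives $g_{lp}(n) \leq \min_{p,\theta} f(p,\theta)$ subject to $p_{n_1}\leq\dots\leq p_{n_k}$ and $p_{n_k}\leq p_{n_l}$ for $l>k$, establishing admissibility.

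The main subtlety is making the $L \leq L_{LP}$ step airtight: one has to verify that the LP formulation in \cite{Conitzer2006} is indeed a restriction of the trivial pairwise relaxation used by $L$, rather than a different relaxation altogether. Once that containment of feasible regions is stated precisely, everything else is routine monotonicity of $\min$.
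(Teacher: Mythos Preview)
Your proposal is correct and follows essentially the same route as the paper: cite $L\leq L_{LP}$ from \cite{Conitzer2006}, use the identical structure of $g_c$ and $g_{lp}$ to conclude $g_c(n)\leq g_{lp}(n)$, then invoke $L_{LP}\leq \sum_j \overline{V}_j$ (the paper attributes this to \cite{Mandhani2009}) together with the proof of Proposition~\ref{prop:heur} for admissibility. If anything, you are slightly more careful than the paper in spelling out the monotonicity step---that $\min_\theta[\theta L+\log\psi_{R,J}(\theta)]\leq \min_\theta[\theta L_{LP}+\log\psi_{R,J}(\theta)]$ follows from pointwise domination on $\theta>0$---which the paper leaves implicit.
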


\begin{proof}
\cite{Conitzer2006} prove that $L\leq L_{LP}$. Note that $g_c$ and $g_{lp}$ are identical besides the replacement of $L$ with $L_{LP}$. Thus $g_{c}(x) \leq g_{lp}(x)$.

It was shown in \cite{Mandhani2009} that $L_{LP}\leq \sum_{j} \overline{V}_j$. In tandem with the proof of Proposition \ref{prop:heur}, $g_{lp}$ is admissible.
\end{proof}

\subsection{Approximate Algorithms}\label{sect:approx}

Exact algorithms to find the MLE of a Mallows model may be intractably slow when $J$ is large or consensus among judges is weak \citep{Mandhani2009}. To deal with such cases, a number of approximate search algorithms have been proposed \citep{Ali2012}. Here, we extend two simple, fast, and accurate algorithms proposed by \cite{Fligner1988} and \cite{Cohen1999}. We also state a third approximate algorithm which improves the accuracy of the latter algorithm at a computational cost. Each algorithm is described in turn.

\subsubsection*{FV Algorithm}

Under certain weak conditions, \cite{Fligner1988} found that the average ranking for each object is an unbiased estimator of the true consensus ranking in a Mallows model. The same paper proposed an approximate search algorithm for the MLE by averaging each object's rank position across judges and ordering the averages from best to worst into an ``average ranking". Then, one calculates the joint density of the data given the average ranking, as well as given each ranking one Kendall distance unit away from the average ranking. The ranking with the highest density in this small collection becomes the approximate MLE.
    
We propose a simple extension to the Mallows-Binomial model which we call ``FV". First, the algorithm calculates average rankings based on scores alone and rankings alone. If a distinct ordering of objects cannot be determined due to ties or partial rankings, all possible ways to break those ties are included in the set. Second, we calculate the joint density of the data given each of the average rankings and all rankings within one Kendall distance unit away from any of the average rankings. The ranking with the highest density becomes the approximate MLE, $\hat\pi_0$. Then, $\hat p$ and $\hat\theta$ are calculated conditional on $\hat\pi_0$.

\subsubsection*{Greedy Algorithm}

\cite{Cohen1999} proposed a greedy algorithm to approximate $\hat\pi_0$. Specifically, their algorithm iteratively estimates $\hat\pi_0$ by choosing the best available object at each ranking level from first to last. Here, ``best" means the object which least lowers the joint likelihood of the data given the current partial ordering. The algorithm is similar to the A* algorithms from Section \ref{sect:exact}, except there is no side-to-side traversal in the tree, e.g., once an object is selected for first place, that choice is never reconsidered. $\hat\theta$ is calculated conditional on $\hat\pi_0$. We can apply \citeauthor{Cohen1999}'s algorithm to the Mallows-Binomial model using the density function of the Mallows-Binomial model instead of the Mallows model, in what we call the ``Greedy" algorithm.

\subsubsection*{Greedy Local Algorithm}

The ``Greedy Local" algorithm extends the Greedy algorithm with a local search. Specifically, it runs the Greedy algorithm as stated and subsequently calculates the joint likelihood of the data given $\hat\pi_0$ and all rankings within one Kendall distance unit away from $\hat\pi_0$. If no ranking yields a higher likelihood than $\hat\pi_0$, the search stops. Else, $\hat\pi_0$ is updated to be the ranking with the current highest likelihood and the local search repeats until no better ranking is found. Then, $\hat p$ and $\hat\theta$ are calculated conditional on $\hat\pi_0$. 

The Greedy Local algorithm is slower than the Greedy algorithm, but guaranteed to estimate a $\hat\pi_0$ which yields a likelihood at least as great as that from the Greedy algorithm. When the Greedy algorithm identifies the exact MLE, the computational expense of performing the Greedy Local algorithm will be minimal as only one round of local search is performed.

\section{Algorithmic Speed and Accuracy Simulation Studies}\label{sect:simulation}

We now compare the speed and accuracy of estimation algorithms through a simulation study. We ran 20 unique simulations for each combination of model constants $I\in\{5,20,80\}$, $M\in\{10,20,40\}$, $J\in\{6,12,18\}$, and $R\in\{6,12,18|R\leq J\}$ and parameter $\theta\in\{1,2,3\}$. In each, we sampled $p$ randomly from a Uniform$[0,1]^J$. Then, estimation was performed on each data set using each of the 5 algorithms described in Section \ref{sect:estimation}: Crude, LP, FV, Greedy, and Greedy Local. The first two are exact algorithms while the latter three are approximate. We now demonstrate results separately based on speed and accuracy of the algorithms.

\subsection{Speed}

There are two useful metrics to consider when evaluating speed in graph search algorithms. The first is overall time, which we measure in seconds. The second is the number of nodes traversed, which signifies an efficient algorithm with respect to memory. While time may be a more practically important metric, if the number of nodes traversed is substantially smaller for a slower algorithm then potential improvements to memory time or code efficiency may ultimately result in a faster algorithm. We compare algorithm speed in Figure \ref{fig:speed} on the basis of these two metrics.

\begin{figure}[h!!]
    \centering
    \includegraphics[width=\textwidth]{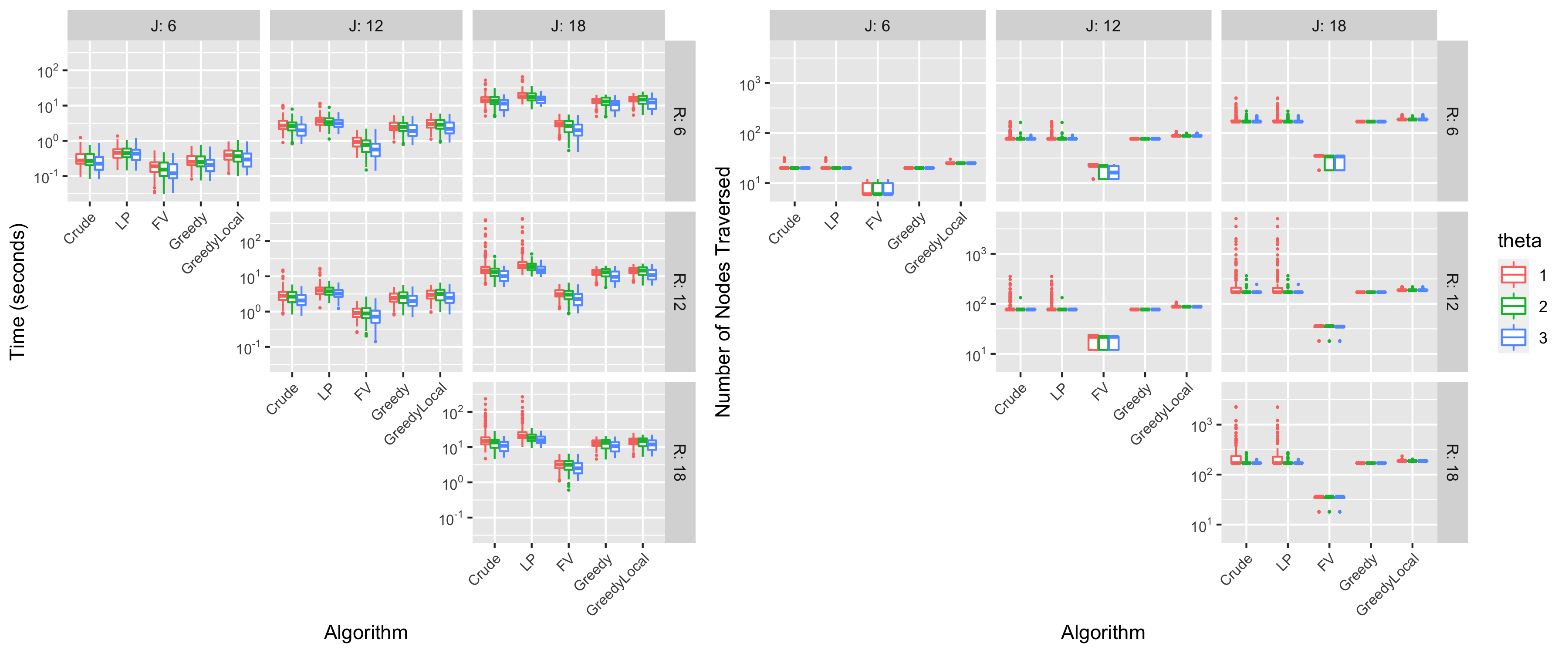}
    \caption{Speed of algorithms based on time (\textit{left}) and number of nodes traversed (\textit{right}) across different values of $J$, $R$, and $\theta$. Results are aggregated over $M$ and $I$.}
    \label{fig:speed}
\end{figure}

Among the exact algorithms, the number of nodes traversed is comparable between both yet computation time is reasonably higher for LP under most regimes. When exact search is desired, we recommend the Crude algorithm on the basis of these results. Regarding the approximate algorithms, the FV algorithm is substantially faster than the rest. This difference is by approximately an order of magnitude in all regimes. The Greedy and Greedy Local algorithms are generally similar in speed to the Crude exact algorithm, a potentially disappointing result given that Greedy and Greedy Local operate under no guarantee of providing an exact solution. However, they exhibit consistent speed results, unlike the exact algorithms which have frequent and extreme high-time outliers. 

Overall, we observe that estimation time generally increases as $J$ increases and decreases as $\theta$ increases. These results should not be surprising: For large $J$, the algorithms can be slow due to the massive parameter domain. When $\theta$ is small, ranking consensus is weak so search algorithms may be pulled into many distinct subspaces of the parameter domain. Speed does not change substantially as $R$ increases. 

\subsection{Accuracy}

We measure accuracy of the approximate search algorithms using two metrics: The first is the proportion of simulations in which each algorithm returns the true MLE. However, incorrect estimates may be trivially different from the truth, which leads us to our second metric: The Kendall distance to the true MLE. This measures how far away the estimated ordering of the object quality parameters are from $\hat\pi_0$. We compare algorithm accuracy in Figure \ref{fig:accuracy} on the basis of these two metrics.

\begin{figure}[h!!]
    \centering
    \includegraphics[width=\textwidth]{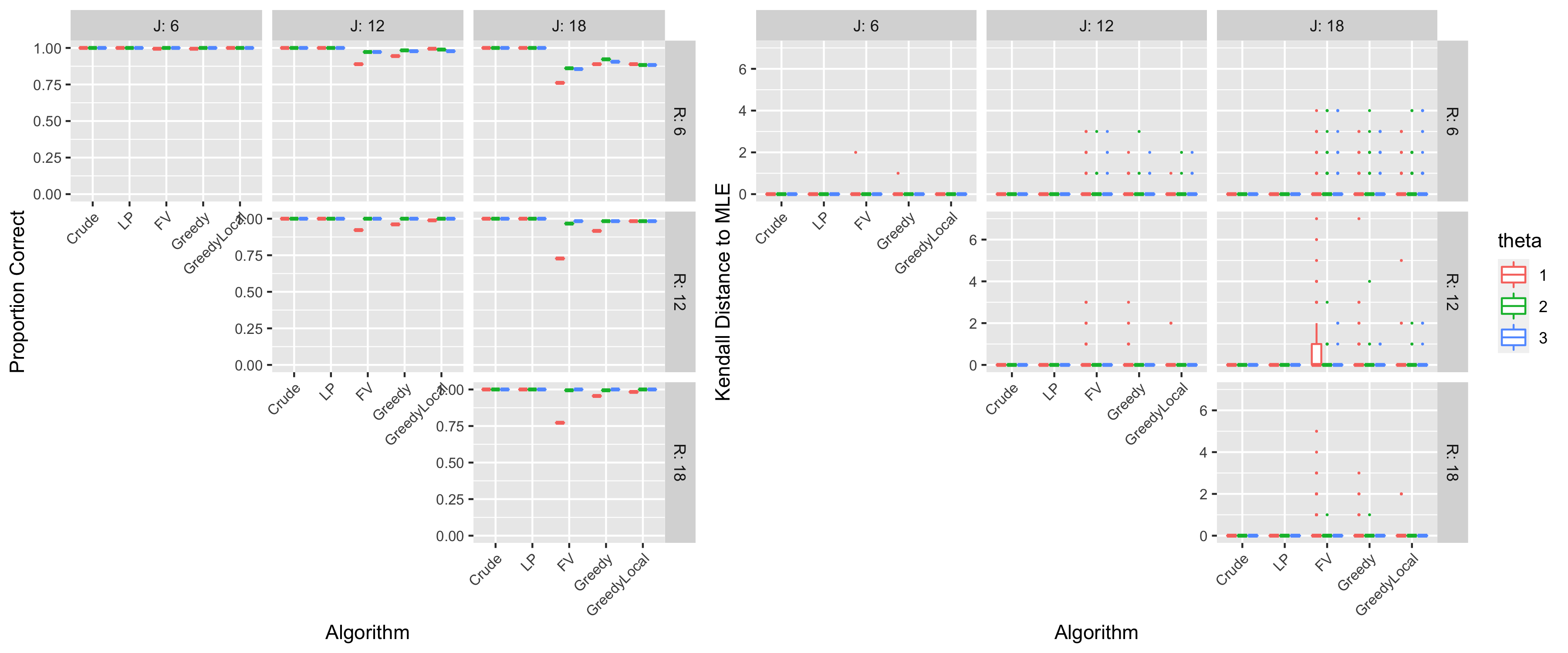}
    \caption{Accuracy of algorithms based on the proportion of estimates equal to the true MLE (\textit{left}) and Kendall distance to the true MLE (\textit{right}) across different values of $J$, $R$, and $\theta$. Results are aggregated over $M$ and $I$.}
    \label{fig:accuracy}
\end{figure}

The proportion correct will be 1 and the Kendall distance to the true MLE will be 0 for both the exact algorithms, by definition. For the approximate algorithms, both metrics suggest the order of least to most accurate approximate algorithm is FV, Greedy, and Greedy Local. We point out that even though FV was the fastest algorithm, it exhibits the worst accuracy overall, especially when $\theta$ is small. On the other hand, Greedy Local is quite often exactly correct. Accuracy generally improves in all approximate algorithms as $R$ increases, which makes sense given that partial rankings equate to less preference information. 


In sum, this section has provided insights for practitioners when selecting an estimation algorithm for the Mallows-Binomial model. If exact MLEs are desired, the Crude algorithm is a good choice. When approximations are satisfactory or required due to computational cost, especially when $J$ is large or postulated $\theta$ is small, we recommend the Greedy Local algorithm due to its high accuracy or the FV algorithm for a fast and rough approximation of the consensus ranking.

\section{Analysis of Grant Panel Review Data}\label{sect:realdata}

We now apply our model to a real data set on grant panel review. After providing an exploratory analysis, we display and interpret estimation results.

\subsection{Exploratory Analysis}

We consider one specific instance of grant panel review conducted by the American Institute of Biological Sciences (AIBS) during Fall 2020, where judges provided both scores and rankings. In the panel, 9 judges discussed 18 proposals. They were allowed to assign scores between 1.0 and 5.0 in single decimal point increments. After discussion and openly scoring each proposal in turn, judges were asked to provide top-6 partial rankings in private. Ties were not allowed. Since judges discussed every proposal, a proposal not receiving a top-6 ranking was deemed worse than each of the ranked top-6 proposals. With a few exceptions, all judges scored all proposals and ranked their top 6. One judge scored only one proposal and did not provide a ranking; another did not provide a ranking, and a third only provided a top-5 ranking. Based on information from the AIBS, missing data occurred for reasons independent of any characteristics of the proposals, such as child care or family responsibilities as panel review discussions occurred remotely during the Covid-19 pandemic. Thus, we can assume the missing data to be missing completely at random. In this case, estimation using all available (partial or complete) rankings and scores will not be biased \citep{little2019statistical}.  If missingness was due to circumstances related to object quality, for example, one would have to carry out a different treatment of missing data \citep{little2019statistical}. Figure \ref{fig:panel1_EDA} summarizes the data with proposal scores on the left and partial rankings on the right.

\begin{figure}[h!!]
    \centering
    \includegraphics[width=\textwidth]{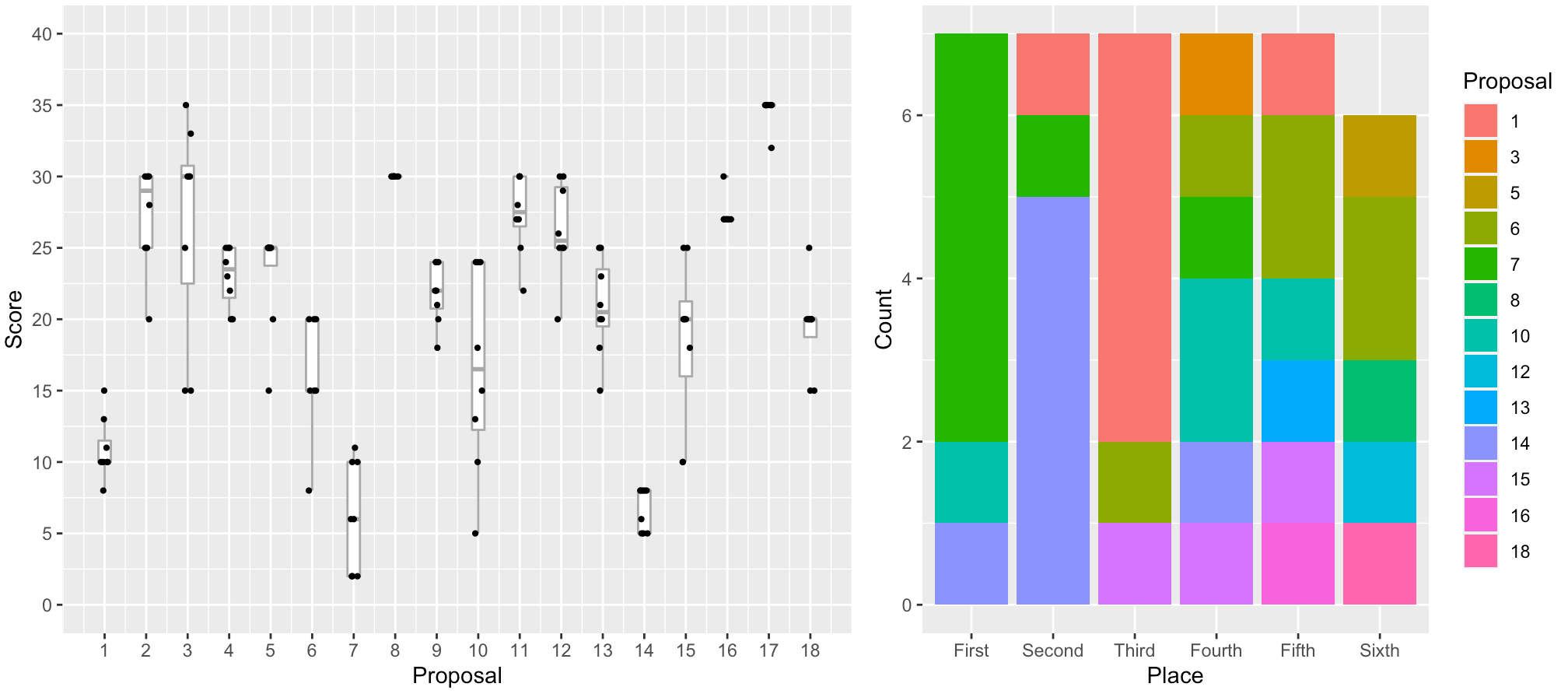}
    \caption{AIBS grant panel review data. \textit{Left}: Scores by proposal (boxplots in gray; raw data in black). \textit{Right}: Proposals by rank.}
    \label{fig:panel1_EDA}
\end{figure}

We observe a variety of scoring and ranking patterns by proposal. For some proposals all judges gave identical scores, while for others there was wide disagreement among judges. Additional calculations (shown in Appendix \ref{sect:A_mv}) indicate that sample variances of scores for each proposal are roughly centered around the theoretical variances based on Binomial score models, and differences may be partially explained by the small sample size. Furthermore, proposals with moderate scores tend to have higher variances than those with generally high or low scores. These observations suggest that Binomial score models are reasonable for this data. For rankings, 13 of the 18 proposals were in at least one judge's top-6 ranking. However, Figure \ref{fig:panel1_EDA} shows that a smaller subset of proposals were ranked by a majority of the judges (e.g., proposals 1, 7, and 14). Separately, we also measure the consistency between rankings and scores at the judge level. If the rankings were to always align with the order of the scores, for example, the rankings may be thought of as providing little additional information. To quantify this, we measure the Kendall distance (i.e., the number of pairwise disagreements) between each judge's partial ranking and the implied order of his/her scores. When a judge assigns equal scores to any two proposals or does not rank any two proposals, we do not count potential inconsistencies between them. We found the Kendall distances between each judge's partial ranking and score-implied ranking to be $\{2,4,4,5,7,11,22\}$ (ordered from least to greatest). Given that each judge only provided a top-6 ranking of the proposals, there is substantial discordance between rankings and scores at the judge level. We believe this further motivates the use of a combined model for rankings and scores for this data set.

The AIBS is principally interested in identifying which proposals should receive funding. While thematic and other considerations also contribute to funding decisions, funding agencies rely on peer review to identify which proposals are quality proposals and whether proposals can be ordered or tied in quality. 
Thus, both estimating proposal quality parameters and identifying a consensus ranking are of interest.
Understanding uncertainty in the estimated consensus ranking is key for understanding if objects are of similar quality. 

We fit a Mallows-Binomial model to the data, in which $M=40$, $I=9$, $J=18$, and $R=6$. In doing so, we make note of a few assumptions. First, we assume that each proposal has a true underlying quality. The underlying qualities imply a true ordering of the proposals from best to worst, which we seek to estimate. Second, we assume that the population of judges is homogeneous in its preferences. This may be interpreted as assuming that all judges use the same criteria when ranking or scoring and that all variation in scores and rankings is due to random chance, as opposed to true ideological differences. Third, we assume that all scores and rankings are conditionally independent given the latent true underlying quality of a proposal and the level of consensus strength.

\subsection{Results}

We now present the MLE and the associated bootstrapped 90\% confidence intervals of the consensus scale parameter $\theta$ and object quality vector $p$. Confidence intervals are based on 200 bootstrap samples. Table \ref{tab:MLE} contains parameter estimates and Figure \ref{fig:results_mle} displays estimates alongside the data. In the left panel, expected scores and associated confidence intervals overlay judges' observed scores. The right panel is a histogram of the Kendall distance between each judge's partial ranking and the estimated MLE of the consensus ranking.

\begin{table}[ht]
\centering
\begin{tabular}{ccc|ccc}
  \hline
Parameter & MLE & 90\% CI & Parameter & MLE & 90\% CI\\ 
  \hline
$\theta$ & 0.529 & (0.421,1.124) & $p_{10}$ & 0.416 & (0.308,0.525) \\ 
  $p_1$ & 0.272 & (0.239,0.306) & $p_{11}$ & 0.684 & (0.642,0.730) \\ 
  $p_2$ & 0.683 & (0.626,0.729) & $p_{12}$ & 0.656 & (0.565,0.711) \\ 
  $p_3$ & 0.666 & (0.544,0.766) & $p_{13}$ & 0.522 & (0.481,0.565) \\ 
  $p_4$ & 0.575 & (0.553,0.616) & $p_{14}$ & 0.169 & (0.150,0.186) \\ 
  $p_5$ & 0.563 & (0.511,0.646) & $p_{15}$ & 0.463 & (0.374,0.541) \\ 
  $p_6$ & 0.400 & (0.325,0.453) & $p_{16}$ & 0.683 & (0.646,0.698) \\ 
  $p_7$ & 0.153 & (0.103,0.199) & $p_{17}$ & 0.866 & (0.850,0.875) \\ 
  $p_8$ & 0.750 & (0.711,0.750) & $p_{18}$ & 0.484 & (0.444,0.541) \\ 
  $p_9$ & 0.563 & (0.526,0.588) &  &  &  \\ 
   \hline
\end{tabular}\\
$\hat\pi_0 = \{7, 14, 1, 6, 10, 15, 18, 13, 5, 9, 4, 12, 3, 16, 2, 11, 8, 17\}$
\caption{Maximum likelihood estimates of model parameters for the AIBS grant panel review data.\label{tab:MLE}}
\end{table}

\begin{figure}[h!!]
    \centering
    \includegraphics[width=.9\textwidth]{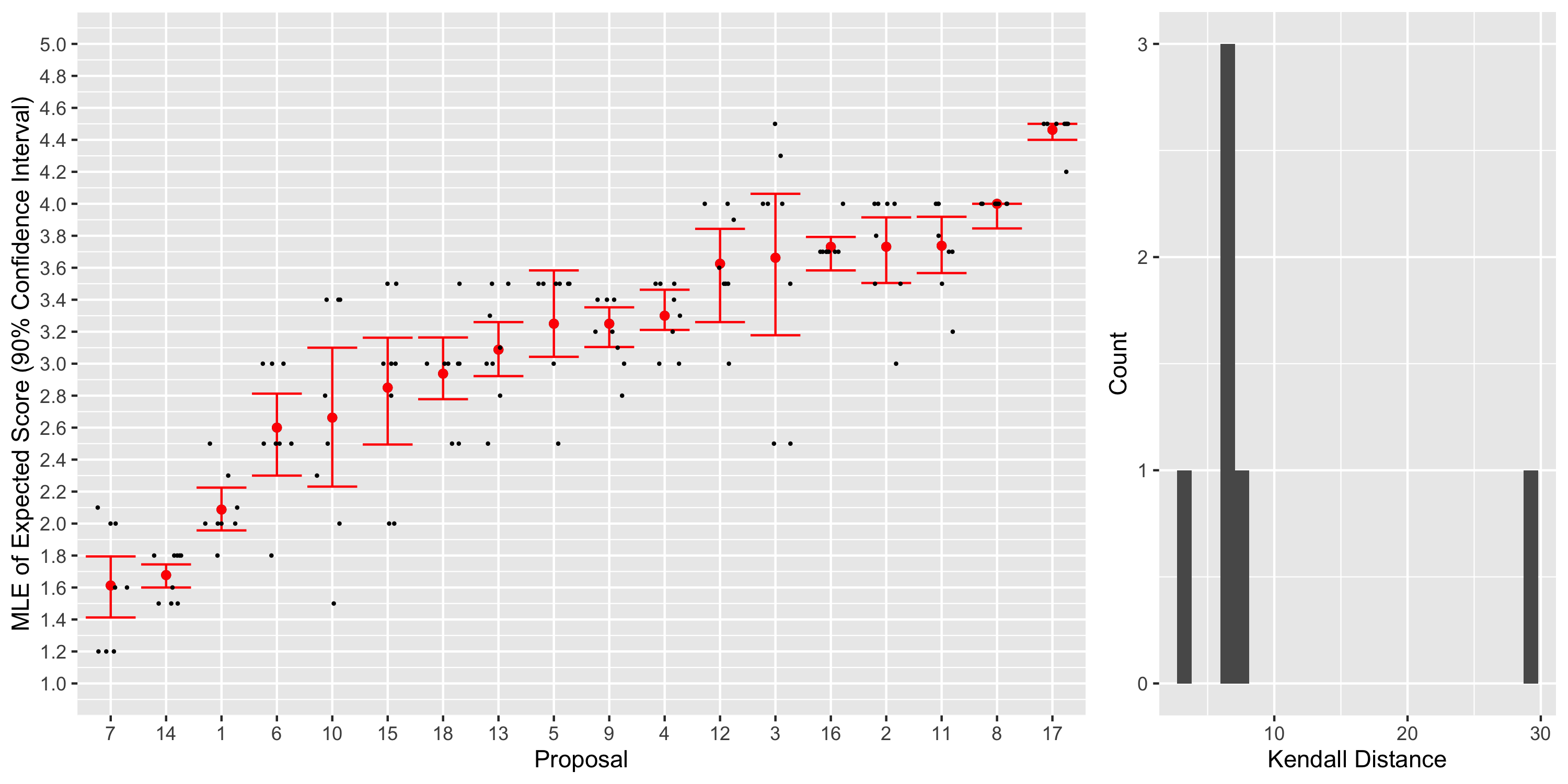}
    \caption{Results overlaid on data. \textit{Left:} Scores by proposal (black) and MLE and 90\% confidence intervals of expected score (red). Expected and observed scores are provided on their original scale. The order of proposals on the x-axis aligns with the MLE of the consensus ranking in the Mallows-Binomial model. \textit{Right:} Histogram of Kendall distance $d_{R,J}$ between each judge's partial ranking and $\hat\pi_0$.}
    \label{fig:results_mle}
\end{figure}

As shown in the left panel of Figure \ref{fig:results_mle}, the MLEs of the expected scores approximately equal the mean observed scores. However, confidence bands reflect information obtained from both scores and rankings. For example, proposals 8 and 16 have lower confidence limits that are much lower than the minimum score they received, which is unusual for a measure of the expected (mean) score. This likely occurs since they were each ranked comparatively better than the scores they received on average. We also notice that a few proposals share the same MLE of true underlying quality but are strictly ordered (i.e., not tied) in the consensus ranking. For example, proposals 5 and 9 correspond to $\hat p_5=\hat p_9 = 0.563$, but proposal 5 is ranked higher than proposal 9 in $\hat\pi_0$. In this case, proposal 5 received a marginally worse average score than 9 but was ranked higher. Thus, the model can capture a difference in ranking while suggesting the true underlying quality is likely nearly identical. In the right panel, we notice that most judge's partial ranking mostly aligned with the MLE of the consensus ranking. The outlier in the right panel of Figure \ref{fig:results_mle} corresponds to a judge who assigned top-6 rankings to three proposals with comparatively poor scores (proposals 3, 8, and 16).

We display the estimated consensus ranking and associated 90\% ranking confidence intervals for each proposal based on the Mallows-Binomial model in Figure \ref{fig:ER}. Additionally, we show results that would be obtained under four separate ranking aggregation or score aggregation models. The first model, \textit{Converted Scores}, uses scores and rankings converted into scores for each judge such that the first-ranked object receives that judge's best score, the second-ranked object receives that judge's second-best score, etc., as suggested by \cite{Li2009}. Then, the model uses independent Binomial score distributions for each proposal (no rankings are modeled). The second comparison model, \textit{Only Scores}, is identical to the first but excludes all rankings. The third comparison model, \textit{Converted Rankings}, uses rankings and scores converted into rankings for each judge by simple ordering (ties are broken at random). Then, a Mallows distribution is used to model the ranking data. The fourth comparison model, \textit{Only Rankings}, is identical to the third but excludes all scores. We note that \textit{Converted Scores} and \textit{Converted Rankings} use all the available data (after conversion) and therefore provide the most direct comparison to the Mallows-Binomial, while \textit{Only Scores} and \textit{Only Rankings} are limited by the exclusion of certain preference data; none of the comparison methods jointly model the original rankings and scores. Confidence intervals for each model are based on 200 bootstrap samples.

\begin{figure}[h!!]
    \centering
    \includegraphics[width=1\textwidth]{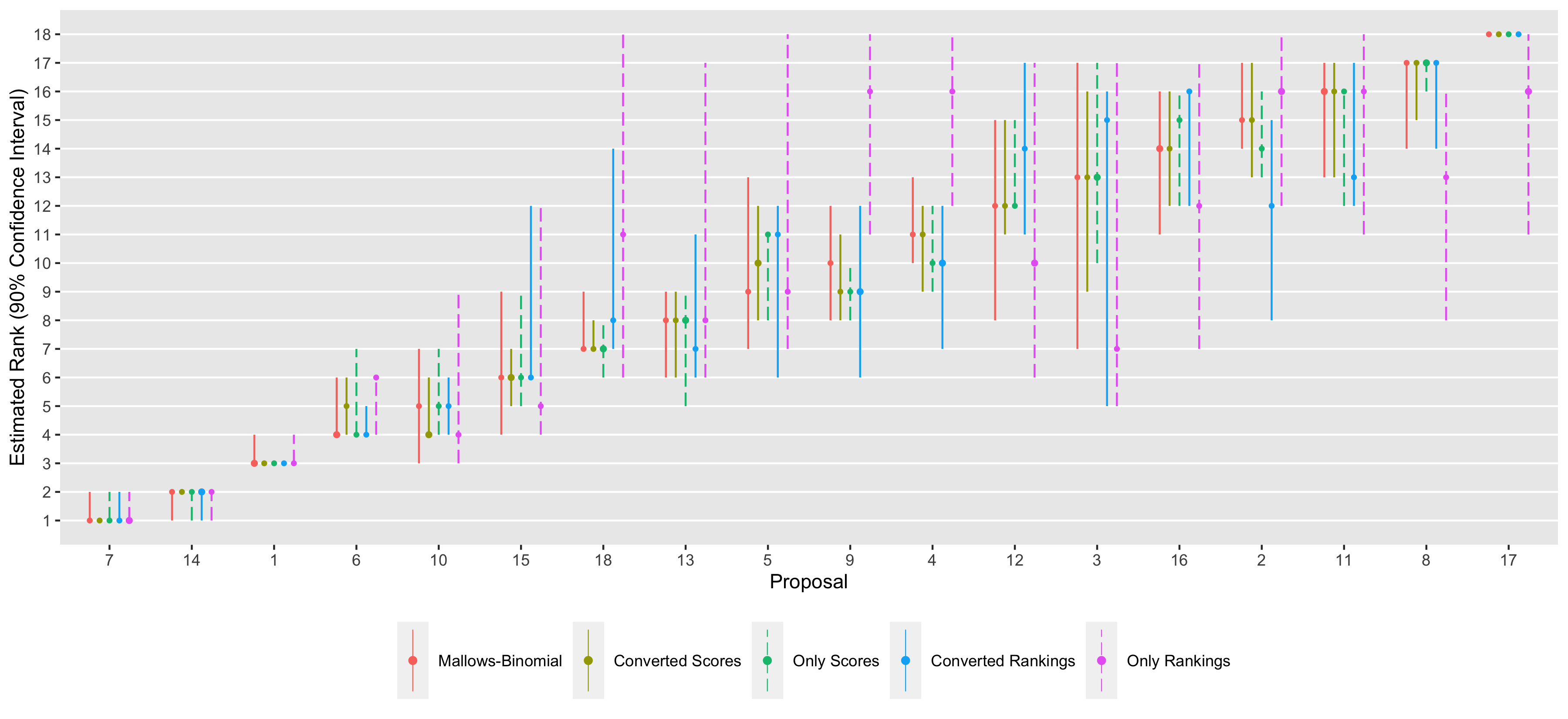}
    \caption{Estimated ranks and 90\% confidence intervals for the \textit{Mallows-Binomial} model based on scores and partial rankings and four competing models based on: (1) scores and rankings converted into scores in Binomial models (\textit{Converted Scores}), (2) scores in Binomial models that exclude rankings (\textit{Only Scores}), (3) partial rankings and scores converted into rankings in a Mallows model (\textit{Converted Rankings}), and (4) rankings in a Mallows model that excludes scores (\textit{Only Rankings}). Confidence intervals for models that exclude data (either rankings or scores, respectively) are represented with dashed lines. The order of proposals on the x-axis aligns with the MLE of the consensus ranking in the \textit{Mallows-Binomial} model.}
    \label{fig:ER}
\end{figure}

We observe in Figure \ref{fig:ER} that the Mallows-Binomial model provides a sensible estimated ranking for each proposal: Each proposal has a unique point estimate for rank place and the associated 90\% confidence intervals reflect the scores and ranks it received. For example, proposal 7 was ranked first by 5 of the 7 judges and had the best average score, but proposal 14 was highly ranked by many judges and received a similarly high average score. Thus, the 90\% confidence intervals of (1,2) for the rank place of proposals 7 and 14 appear appropriate. On the other hand, proposal 3 received the 13th best average score, which corresponds to its point estimate for rank place. However, its 90\% confidence interval (7,17) for rank place is appropriately wide given its wide range of scores (minimum 15, maximum 35) and a single fourth-place ranking, which injects uncertainty into the model. In general, confidence intervals are narrow when consensus between scores and rankings across judges is strong and are wider otherwise.

Results from the Mallows-Binomial model improve upon results from the other models in unique ways. The \textit{Converted Scores} and \textit{Only Scores} models provide similar rank place point estimates to the Mallows-Binomial model but confidence intervals that may be considered inappropriate. \textit{Converted Scores} provide narrow intervals that reflect an artificially inflated sample size (resulting from combining both original and converted scores) but does not account for uncertainty arising from converting rankings into scores. Using only scores limits the amount of information on judges' perception of proposal quality via rankings, which naturally leads to a loss in precision. However, sometimes the \textit{Only Scores} model exhibits narrower confidence intervals than the Mallows-Binomial model when scores are consistent but rankings are not, which still falsely reflects the true combined preferences of the judges. The \textit{Converted Scores} and \textit{Only Scores} models do not estimate the consensus scale parameter $\theta$.

Point estimates and confidence intervals from the \textit{Converted Rankings} and \textit{Only Rankings} models differ substantially from those of the Mallows-Binomial. Differences are particularly apparent for proposals ranked in 7th place or worse, as those proposals generally have less data due to the partial rankings collected. The \textit{Converted Rankings} model loses precision compared to the Mallows-Binomial model in the top ranking places, despite having the same number of observations, since scores converted into rankings via ordering lack information on the strength of the difference in quality between proposals. The \textit{Only Rankings} model has even less precision, since the complete exclusion of scores and limited information provided by partial rankings constrains inference on the many proposals that were never or rarely ranked and leads to uninformative and insensible rankings. For example, proposals 2, 4, 9, 11, and 17 have near-identical and wide confidence bands as they were never ranked, while proposals 3, 5, 8, 12, 13, 16, and 18 have even wider confidence bands since they were ranked only by a small number of judges. Furthermore, the \textit{Converted Rankings} and \textit{Only Rankings} models do not estimate the object quality parameter vector $p$.

Results from the Mallows-Binomial model allow us to compare proposals with confidence. For example, the model suggests that proposals 7 and 14 are of similarly high quality, but that relative quality is harder to differentiate for proposals 1, 6, and 10. These types of comparisons may be useful when drawing a funding line at the AIBS. If the AIBS can fund, for example, only 6 proposals, then using 90\% marginal confidence intervals by proposal they should fund proposals 7, 14, 1, 6 and select two additional proposals between  10, 15, and 13 (perhaps based on point estimates or a random lottery).

\section{Discussion}\label{sect:discussion}

In this paper, we proposed the first unified statistical model for rankings and scores that does not involve data conversion, the Mallows-Binomial model. We formulated a computationally efficient algorithm to find the exact maximum likelihood estimators of model parameters and demonstrated statistical properties of the model such as bias, consistency, and variance of estimators. This research aligns well with the recommendations from a peer review study at the 2016 Neural Information Processing Systems conference that recommended using both rankings and scores to gain benefits from each data format~\citep{shah2018design}. That study also emphasized the need to design algorithms to efficiently combine scores and rankings for further guidance on conference submission quality~\citep[][p.27]{shah2018design}. 

We applied the Mallows-Binomial model to grant review data which collected both scores and partial rankings from a panel of judges. The model was used to identify a consensus ranking based on the scores and partial rankings. The estimated consensus ranking was different from what would be obtained with comparable models for (converted) scores or rankings alone. Furthermore, we demonstrated a method to obtain confidence bands of proposal qualities and/or rank places via the bootstrap that can be used to select proposals that are preferred by reviewers  with statistical confidence. Confidence bands clearly reflect information from both scores and rankings provided by the judges. 

The proposed model is useful whenever both rankings and scores for a collection of objects are available. Beyond the example presented here, this may occur in a variety of contexts. For example, relevance of webpages to a search query may be measured from different systems using either numerical metrics (scores) or ordinal comparisons (rankings). In this example, the object quality parameters would measure both relative and absolute relevance to the search query, and the scale parameter would represent consensus among judges. If scores and rankings arise from the same system, rankings may help break ties when scores are close; if different systems are used to provide different scores and rankings, using all available data increases estimation precision. In contrast to methods that convert rankings and scores into data of a single type, the proposed model removes the potential introduction of error by using information from both sources directly. Yet, it allows for using both rankings and scores to express different types of comparison and levels of granularity in preferences. Furthermore, because both types of data are incorporated in a statistical model, this allows for uncertainty quantification in the estimation of true underlying quality, consensus, and strength of consensus when both scores and rankings are present using standard model-based statistical approaches.

Estimation methods presented in this paper for the Mallows-Binomial model can be improved or extended upon in a number of ways. Computational efficiency of estimation may be improved via a different heuristic function in the A* algorithm and permit exact estimation of the model in the presence of large numbers of objects. Approximate algorithms may be improved to increase accuracy and/or speed. In addition, alternative Bayesian estimation methods may be developed by extending the work of \cite{Vitelli2018} on the Mallows model. Model components may also be generalized: For example, the Generalized Mallows distribution or Infinite Generalized Mallows distribution proposed by \cite{Meila2012} may replace the ranking distribution component of our proposed model. The Bradley-Terry or Plackett-Luce distributions may also replace the ranking distribution component and allow for additional types of ranking data (such as pairwise or group-wise comparisons beyond top-$R$ partial rankings), although the downside is that these distributions do not directly include a consensus ranking parameter. Additionally, the Beta-Binomial or Poisson distributions may replace the Binomial score distribution component in our proposed model if one was interested in accounting for differences in the variance of object scores among judges or working with score data with no theoretical maximum value, respectively. Lastly, the model may be considered in a latent class framework to identify the presence of and to measure local consensus among heterogeneous preference groups, e.g., by extending earlier work on the mixture of Mallows distributions \citep{Busse2007} or  Plackett-Luce distributions \citep{Gormley2006,Gormley2009}.

\section{Acknowledgements}
The authors thank Stephen Gallo and the American Institute of Biological Sciences for the data used in this article. This work was funded by NSF grant 2019901.

\newpage
\appendix

\section{Bias of $(\hat p,\hat\theta)$ in the Mallows-Binomial distribution}\label{sect:A_bias}

We are interested in estimating the MLEs $(\hat p,\hat\theta)$ using a collection of $I$ independent and identically distributed samples from a Mallows-Binomial($p,\theta$) distribution, \\$(X,\Pi)=\{(X_1,\Pi_1), \dots, (X_I,\Pi_I)\}$. 

\subsection*{Example Demonstrating Bias}

Considering the following example: Let $I=1$, $J=R=3$, and $M=1$. Suppose $p_0=[0.1,0.4,0.9]^T$ and $\theta_0 = 1$. Then, there are 48 unique, possible observations. Specifically, they are the combinations of $X_1\in\{0,1\}$, $X_2\in\{0,1\}$, $X_3\in\{0,1\}$, and $\Pi\in\{123,132,213,213,312,321\}$.

We do not enumerate all data combinations or their associated likelihoods and MLEs. However, we state the bias of the MLEs below:
\begin{align*}
    \text{Bias}(\hat p_1) &= E_{P_{\theta_0,p_0}}[\hat p_1] - p_{01} \approx 0.1419 -  0.1 = 0.0419\\
    \text{Bias}(\hat p_2) &= E_{P_{\theta_0,p_0}}[\hat p_2] - p_{02} \approx 0.4192 -  0.4 = 0.0192\\
    \text{Bias}(\hat p_3) &= E_{P_{\theta_0,p_0}}[\hat p_3] - p_{03} \approx 0.8390 - 0.9 = -0.0610\\
    \text{Bias}(\hat \theta) &= E_{P_{\theta_0,p_0}}[\hat \theta] - \theta_{0} = \infty - 1 = \infty
\end{align*}
Thus, all MLEs in this example are biased. Code to replicate this simulation can be found in the code repository.

\subsection*{Intuition for Bias}

We now provide intuition for why bias exists in the Mallows-Binomial model. We start by proving that $(\hat p,\hat\theta)$ is biased when $\pi_0$ is known: Assume $\pi_0$ is fixed and known. Then,

\begin{align*}
    (\hat p,\hat\theta)|\pi_0 &= \underset{p,\theta|\text{Order}(p)=\pi_0}{\arg\max}\prod_{i=1}^I \frac{e^{-\theta d(\pi_i,\pi_0)}}{\psi_R(\theta)}\prod_{j=1}^J p_j^{x_{ij}}(1-p_j)^{M-x_{ij}}\\
    &= \underset{p,\theta|\text{Order}(p)=\pi_0}{\arg\max} -\theta\Big[\sum_{i=1}^I d(\pi_i,\pi_0)\Big]-I\log\psi_R(\theta) + \\
    & \ \ \ \ \ \ \ \ \ \ \ \ \ \ \ \ \ \ \ \ \ \ \ \sum_{j=1}^J\Big[\sum_{i=1}^I x_{ij}\log p_j + (IM-\sum_{i=1}^I x_{ij})\log(1-p_j)\Big]
\end{align*}

In the above expression, $p$ and $\theta$ factor. Thus,

\begin{align*}
\hat \theta|\pi_0 &= \underset{\theta|\pi_0}{\arg\max} -\theta\Big[\sum_{i=1}^I d(\pi_i,\pi_0)\Big]-I\log\psi_R(\theta)\\
\hat p|\pi_0 &= \underset{p|\text{Order}(p)=\pi_0}{\arg\max}  \sum_{j=1}^J\Big[\overline{x}_{j}\log p_j + (M-\overline{x}_{j})\log(1-p_j)\Big]
\end{align*}
\cite{Tang2019} proved that $E[\hat\theta|\pi_0] > \theta$. So when $\pi_0$ is known, $\hat\theta$ is biased upward.

Regarding $\hat p$, the problem is now precisely an \textit{isotonic regression} problem for binomial parameters. It was shown in \cite{ayer1955empirical} and \cite{barlow1972isotonic} that
$$
\hat p_j|\pi_0=
\begin{cases}
\overline{x}_j/M & \overline{x}_1,\dots,\overline{x}_{j-1}\leq \overline{x}_j\leq \overline{x}_{j+1},\dots,\overline{x}_{J}\\
(\sum_{i\in A_j}\overline{x}_i)/M, & \text{otherwise}
\end{cases}
$$
where $A_j$ is the intersection of the lower and upper sets of $\pi_0$ that include $j$. \cite{robertson1988order} proved that $E[\hat p_j|\pi_0] \neq p_j$ in generality, i.e., the parameters $p_j$, $j=1,\dots,J$ are biased. The direction of the bias may vary between each $p_j$.

Thus, $(\hat p,\hat\theta)|\pi_0$ is biased. We would now like to prove that the MLEs are biased even when $\pi_0$ is not known. This is a substantially more challenging problem due to the interconnectedness of $\hat p$ and $\hat\theta$ during estimation. A complete proof is left open. However, the previous counterexample demonstrates that bias is present in at least some situations while simulations in Section \ref{sect:properties} demonstrate the bias is often minimal.

\section{Consistency of $(\hat p,\hat\theta)$ in the Mallows-Binomial Distribution}\label{sect:A_consist}

Let $(X,\Pi)_I$ denote a collection of $I$ independent and identically distributed observations from a Mallows-Binomial($p_0,\theta_0$) distribution, where $(p_0,\theta_0)$ is unknown and will be estimated via maximum likelihood. Assume that each true $p_{0j}$, $j=1,\dots,J$ lie within in the unit interval and are each bounded away from 0 and 1, and that the true $\theta_0$ is less than $J$ and is bounded greater than 0. We note that the restrictions on $p_{0}$ ensure each proposal may receive any integer score between 0 and M with positive probability. Furthermore, the restrictions on $\theta_0$ ensure there is not a complete lack of consensus (as if $\theta_0$ were equal to 0) and does not substantially impact situations of near-complete consensus (when $\theta_0\geq J$, it is near certain that all rankings are identical and match the true $\pi_0$ regardless of how large $\theta_0$ is). Under these assumptions, the unknown parameters live in a compact space. We denote this space by $\Theta$.

The loglikelihood (less a normalizing constant) of each observation $(X_i,\Pi_i)$, $i=1,\dots,I$, is
$$\ell_{p,\theta}(X_i,\Pi_i) = -\theta d_{R,J}(\Pi_i,\text{Order}(p)) - \log\psi_{R,J}(\theta) + \sum_{j=1}^J [X_{ij}\log p_j + (M-X_{ij})\log(1-p_j)]$$
It can be seen that $d_{R,J}(\Pi_i,\text{Order}(p))\in\{0,1,\dots,J(J-1)/2\}$ and that $\psi_{R,J}(\theta)\in(1,J!)$. The loglikelihood is not continuous in $p$. Discontinuities may exist when $p_j=p_k$, $j\neq k$.

We would like to show the consistency of the maximum likelihood estimators (MLE), $(\hat p_I,\hat\theta_I)$ to the true parameters, $(p_0,\theta_0)$. We define a few quantities:

\begin{align*}
M_I(p,\theta) &= \frac1I\sum_{i=1}^I \ell_{p,\theta}(X_i,\Pi_i)\\
    &= -\theta \frac1I\sum_{i=1}^I d_{R,J}(\Pi_i,\text{Order}(p)) -\log\psi_{R,J}(\theta) + \sum_{j=1}^J [\frac1I\sum_{i=1}^I X_{ij}\log p_j + (M-\frac1I\sum_{i=1}^I X_{ij})\log(1-p_j)]\\
    &= -\theta \overline{d}_{R,J}(\Pi,\text{Order}(p)) -\log\psi_{R,J}(\theta) + \sum_{j=1}^J[ \overline{X}_{j}\log p_j + (M-\overline{X}_j)\log(1-p_j)]\\
M(p,\theta) &= E[\ell_{p,\theta}(X_1,\Pi_1)]\\
&= -\theta\Big[\frac{Re^{-\theta}}{1-e^{-\theta}} - \sum_{j=J-R+1}^J \frac{je^{-j\theta}}{1-e^{-j\theta}}\Big] -\log\psi_{R,J}(\theta) + \sum_{j=1}^J\Big[ Mp_j\log p_j + (M-Mp_j)\log(1-p_j)\Big],
\end{align*}

since
\begin{align*}
    E[X_j] &= Mp_j, \ \ \forall j\\
    Var[X_j]&=Mp_j(1-p_j) \ \ \forall j\\
    E[d_{R,J}] &= \frac{Re^{-\theta}}{1-e^{-\theta}} - \sum_{j=J-R+1}^J \frac{je^{-j\theta}}{1-e^{-j\theta}}\\
    Var[d_{R,J}] &= \frac{Re^{-\theta}}{(1-e^{-\theta})^2} - \sum_{j=J-R+1}^J \frac{j^2e^{-j\theta}}{(1-e^{-j\theta})^2}.
\end{align*}
The first and second lines are standard results regarding Binomial random variables and the third and fourth lines follow directly from  \cite{Fligner1986}.

We now are ready to prove the consistency of the maximum likelihood estimators. We do so using Theorem 5.7 of \cite{van2000asymptotic}. Specifically, we must prove (1) uniform consistency of $M_I(p,\theta)$ to $M(p,\theta)$, (2) the true parameter values are well-separated in $M(p,\theta)$, and (3) that $M_I(\hat p_I,\hat\theta_I)\geq M_I(p_0,\theta_0)-o_p(1)$ as $I\to\infty$. Then, the MLE is consistent.

\begin{enumerate}
    \item[(1)] Uniform consistency is proven via Corollary 2.2 of \cite{newey1991uniform}. Specifically, under the assumption that the true parameters live in the compact space $\Theta$, we must prove that $M_I$ converges pointwise to $M$ and that $\forall (p,\theta),(p',\theta')\in\Theta$, $|M_I(p',\theta')-M_I(p,\theta)|\leq O_p(1)||(p',\theta'),(p,\theta)||_1$. 
    
    We start with pointwise convergence: Let $(p,\theta)\in\Theta$ be arbitrary but fixed. Then,
    
    \begin{align*}
        M_I&(p,\theta)-M(p,\theta) \\
        &= -\theta(\overline{d}_{R,J} - E \overline{d}_{R,J}) - \log\frac{\psi(\theta)}{\psi(\theta)} + \sum_{j=1}^J \log p_j (\overline{X}_j-E\overline{X}_j) + \log(1-p_j)(E\overline{X}_j - \overline{X}_j)\\
        &= -\theta o_p(1) + \sum_{j=1}^J \log p_j o_p(1) + \log(1-p_j) o_p(1)\\
        &= o_p(1),
\end{align*}

so $M_I$ converges pointwise to $M$. Next, let $(p,\theta),(p',\theta')\in\Theta$ be arbitrary but fixed. Then,

\begin{align}
    |&M_I(p',\theta')-M_I(p,\theta)| \nonumber \\
    &= \big|\theta'\overline{D}(\Pi,\text{Order}(p'))-\theta \overline{D}(\Pi,\text{Order}(p')) +\log\frac{\psi(\theta')}{\psi(\theta)} +\sum_{j=1}^J \overline{X}_j\log \frac{p_j}{p_j'} + (M-\overline{X}_j)\log\frac{1-p_j}{1-p_j'}\big| \label{eq:consis1}\\
    &\leq \underbrace{\big|\theta'\overline{D}(\Pi,\text{Order}(p'))-\theta \overline{D}(\Pi,\text{Order}(p'))\big|}_\text{Term 1} + \underbrace{\big|\log\frac{\psi(\theta')}{\psi(\theta)}\big|}_\text{Term 2} +\underbrace{\sum_{j=1}^J \big|\overline{X}_j\log \frac{p_j}{p_j'} + (M-\overline{X}_j)\log\frac{1-p_j}{1-p_j'}\big|}_\text{Term 3} \label{eq:consis2}\\
    &= O_p(1)|\theta'-\theta| + O_p(1)|\theta'-\theta| + \sum_{j=1}^JO_p(1) |p_j'-p_j| \label{eq:consis3}\\
    &= O_p(1)||(p',\theta'),(p,\theta)||_1 \nonumber,
\end{align}
where line \ref{eq:consis1} holds by definition, line \ref{eq:consis2} holds by the triangle inequality, and line \ref{eq:consis3} is shown for each numbered term below:
\begin{align*}
&\text{Term 1} \\
    &=\big|\theta'\overline{D}(\Pi,\text{Order}(p'))-\theta \overline{D}(\Pi,\text{Order}(p'))\big|\\
    &= \big|\theta'(\overline{D}(\Pi,\text{Order}(p'))-E[\overline{D}(\Pi,\text{Order}(p'))]+E[\overline{D}(\Pi,\text{Order}(p'))] - E[\overline{D}(\Pi,\text{Order}(p_0))]) -\\
    & \ \ \ \ \ \theta(\overline{D}(\Pi,\text{Order}(p'))-E[\overline{D}(\Pi,\text{Order}(p'))] + E[\overline{D}(\Pi,\text{Order}(p'))] - E[\overline{D}(\Pi,\text{Order}(p_0))]) + \\
    & \ \ \ \ (\theta'-\theta)E[\overline{D}(\Pi,\text{Order}(p_0))]\big|\\
    &= \big|\theta'(o_p(1) + O_p(1)) -\theta(o_p(1) + O_p(1))] + (\theta'-\theta)O_p(1)\big|\\
    &= \big|(\theta'-\theta)\big|O_p(1).
\end{align*}
Next,
\begin{align*}
\text{Term 2} &= |\log\psi(\theta')-\log\psi(\theta)| \\
    &\leq C_1|\theta'-\theta|\\
    &= O_p(1)|\theta'-\theta|,
\end{align*}
since $\log\psi(\theta)$ is a continuous function defined on a compact range (specifically, the range of $\theta\in\Theta$) and therefore must have a maximum and minimum slope over that range. Similarly,
\begin{align*}
\text{Term 3} &= \sum_{j=1}^J \big|\overline{X}_j(\log p_j - \log p_j') + (M-\overline{X}_j)(\log(1-p_j)-\log(1-p_j'))\big|\\
    &\leq \sum_{j=1}^J \overline{X}_jC_2\big|p_j - p_j'\big| + (M-\overline{X}_j)C_3\big|p_j-p_j'\big|\\
    &= \sum_{j=1}^J \big|p'-p\big|O_p(1).
\end{align*}
since $\log(p_j)$ and $\log(1-p_j)$ are also continuous functions on the compact range of $p_j\in\Theta$ for each $j=1,\dots,J$.

Therefore, the estimator is uniformly consistent.

\item[(2)] The well-separation condition on $M(p_0,\theta_0)$ is straightforward to prove using Lemma 2.2 of \cite{newey1994large}: Since the model is identified (as shown in Proposition \ref{prop:ident}) and $E|\ell_{p,\theta}(X,\Pi)|<\infty$ $\forall p,\theta\in\Theta$ due to the constrained domain of each $X_{ij}$ and $\Pi_i$ regardless of $p,\theta\in\Theta$, we have the desired result.

\item[(3)] The third condition that the sequence $(\hat p_I,\hat\theta_I)$ as $I\to\infty$ satisfies $M_I(\hat p_I,\hat\theta_I) \geq M_I(p_0,\theta_0)-o_p(1)$ is a standard property of the MLE.
\end{enumerate}
Thus, according to the conditions of Theorem 5.7 in \cite{van2000asymptotic}, $(\hat p_I,\hat\theta_I)$ is consistent for $(p_0,\theta_0)$.

\section{Additional Mean-Variance Calculations in Grant Panel Review Scores} \label{sect:A_mv}

In this section, we display the relationship between sample mean and variance of real grant panel review scores to test the appropriateness of the Binomial score model. Discussion of Figure \ref{fig:mv} can be found in Section \ref{sect:realdata}.

\begin{figure}[h!!]
    \centering
    \includegraphics[width=\textwidth]{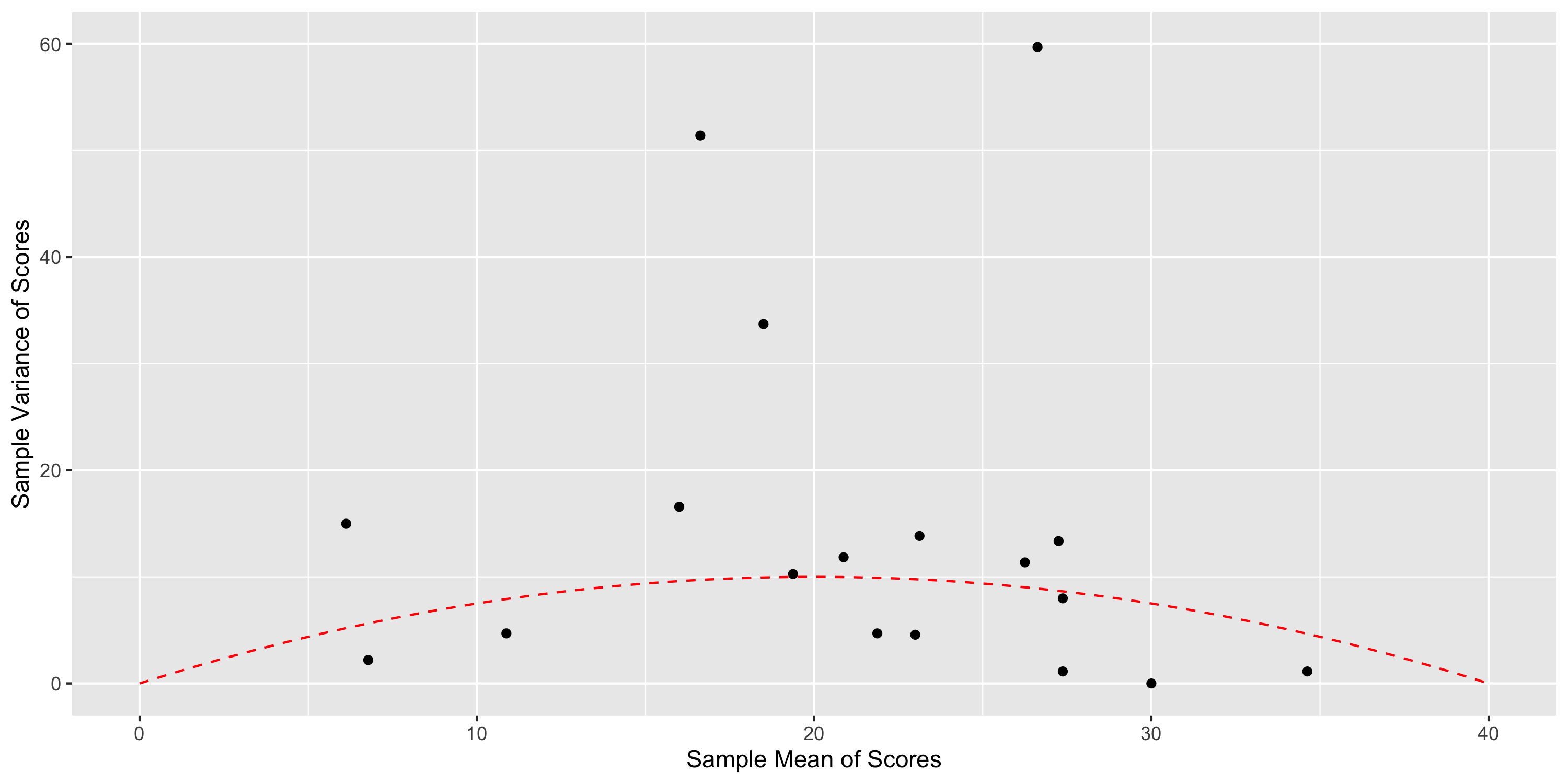}
    \caption{Sample variance of scores vs. sample mean of scores by proposal (\textit{black circles}); theoretical relationship between mean and variance in a Binomial distribution (\textit{red dotted line}).}
    \label{fig:mv}
\end{figure}
\newpage

\bibliography{refs}

\end{document}